\documentclass[11pt]{article}
\usepackage{lmodern}

\usepackage[margin=1in]{geometry}
\usepackage{tikz}
\usepackage{bm}
\usepackage[normalem]{ulem}
\usepackage{parskip}
\usepackage{amsmath}
\usepackage{amssymb}
\usepackage{xspace}
\usepackage[nodate]{datetime} 

\newcounter{theorem} 
\newtheorem{xdefinition}[theorem]{Definition}
\newtheorem{xobservation}[theorem]{Observation}
\newtheorem{xtheorem}[theorem]{Theorem}
\newtheorem{xlemma}[theorem]{Lemma}     
\newtheorem{xproposition}[theorem]{Proposition}
\newtheorem{xcorollary}[theorem]{Corollary}
\newenvironment{definition}{\begin{xdefinition}\rm}%
{\hspace*{\fill}\raisebox{-1pt}{\boldmath$\Box$}\end{xdefinition}}
{\hspace*{\fill}\raisebox{-1pt}{\boldmath$\Box$}\end{xobservation}}
\newenvironment{theorem}{\begin{xtheorem}\rm}{\end{xtheorem}}
\newenvironment{lemma}{\begin{xlemma}\rm}{\end{xlemma}}
\newenvironment{proposition}{\begin{xproposition}\rm}{\end{xproposition}}

\newenvironment{proof}{\begin{trivlist}\item[]{\bf Proof }}%
{\hspace*{\fill}\raisebox{-1pt}{\boldmath$\Box$}\end{trivlist}}

\newcommand{\OPT}{\ensuremath{\operatorname{\textsc{Opt}}}\xspace}
\newcommand{\OFF}{\ensuremath{\operatorname{\textsc{Off}}}\xspace}
\newcommand{\GREEDY}{\ensuremath{\operatorname{\textsc{Greedy}}}\xspace}
\newcommand{\ALG}{\ensuremath{\operatorname{\textsc{Alg}}}\xspace}

\newcommand{\NAT}{\ensuremath{\mathbb{N}}}
\newcommand{\prio}{\ensuremath{\pi}\xspace}
\newcommand{\release}{\ensuremath{r}\xspace}
\newcommand{\length}{\ensuremath{\ell}\xspace}
\newcommand{\compl}{\ensuremath{C}\xspace}
\newcommand{\remaining}{\ensuremath{\ell}\xspace}

\newcommand{\eps}{\varepsilon}
\newcommand{\alg}{\textsc{Alg}\xspace}
\newcommand{\nrpackets}{h}

\begin{document}

\title{Forwarding Packets Greedily on the Line\thanks{The first four authors were supported in part by the Independent Research Fund Denmark, Natural Sciences, grants DFF-0135-00018B and DFF-4283-00079B.}}

\author{
  \begin{tabular}{ll}
        \begin{tabular}{l}
          Joan Boyar \\
          {\small\textsc{University of Southern Denmark}} \\
          \texttt{joan@imada.sdu.dk}
        \end{tabular}
        &
        \begin{tabular}{l}
          Lene M. Favrholdt \\
          {\small\textsc{University of Southern Denmark}} \\
          \texttt{lenem@imada.sdu.dk}
        \end{tabular}
        \\[5ex]
        \begin{tabular}{l}
          Kim S. Larsen \\
          {\small\textsc{University of Southern Denmark}} \\
          \texttt{kslarsen@imada.sdu.dk}
        \end{tabular}
        &
        \begin{tabular}{l}
          Kevin Schewior \\
          {\small\textsc{University of Cologne}} \\
          \texttt{k.schewior@uni-koeln.de}
        \end{tabular} \\[5ex]
        \begin{tabular}{l}
          Rob van Stee \\
          {\small\textsc{University of Siegen}} \\
          \texttt{rob.vanstee@uni-siegen.de}
        \end{tabular}
        \end{tabular}
}

\date{July 29, 2026}

\maketitle

\begin{abstract}
        We consider the problem of forwarding packets arriving online with their destinations
        in a line network.
    In each time step, each router can forward one packet along the edge to its right, and the packet arrives at the next router one time step later. 
    Packets are forwarded until they reach their destination.
    The flow time of a packet is the elapsed time between its release and its arrival at its destination. The goal is to minimize the maximum flow time.

    This problem was introduced by Antoniadis et al.~in 2014, with a focus on line networks. 
    They proposed several natural algorithms. For one, they proved that it is not $O(1)$-competitive; for others, they claimed analogous lower bounds, seemingly leaving no natural candidate for an $O(1)$-competitive algorithm.

    In this paper, we study a natural algorithm not considered in that work. Our algorithm, simply called \GREEDY, selects packets according to their projected flow time under the assumption that they are not delayed any further. We focus on the special case in which each packet needs to be forwarded by one or two routers; this case captures core difficulties.
    We show that \GREEDY achieves a competitive ratio of exactly $2-2^{1-k}$, where $k$ is the number of active routers in the network.	
        
    We also give the first nontrivial general lower bound, which applies even to randomized algorithms: using the same type of instances as in our lower bound for \GREEDY, we show that no algorithm can be $(4/3-\varepsilon)$-competitive for any $\varepsilon>0$.
\end{abstract}

\section{Introduction}
We consider packet scheduling in a network. Routers often have backlogs of packets awaiting transmission, so a policy is needed to decide which packet to forward at each time. Ideally, such a policy would optimize the Quality of Service (QoS) experienced by application-layer clients. This is difficult, however, because routers are generally not informed of which packets belong to which clients. As in Antoniadis et al.~\cite{ABCFMNP14}, we therefore optimize the QoS of the packets themselves by minimizing their flow time.

More specifically, we study online packet forwarding on a line network. Packets are released in an online manner, each with an origin and a destination router. Time is discrete. In each time step, any number of packets may be released at any number of routers, and each router may forward at most one packet to its neighbor on the right. The buffer at each router has unbounded capacity. Routers do not know the states of other routers. The objective is to minimize the maximum flow time, where the flow time of a packet is the time between its release and its arrival at its destination. 

We measure algorithms by their competitive ratios.
Let \OPT be an optimal offline algorithm for packet routing on the line. For any algorithm \ALG, we let $\ALG(I)$ denote the maximum flow time achieved by \ALG on the input sequence $I$.
For any $c \geq 1$, an online algorithm \ALG is said to be $c$-competitive if there exists a constant $b$ such that, for any input sequence $I$, $\ALG(I) \leq c\cdot \OPT(I)+b$.
The (asymptotic) competitive ratio of \ALG is the infimum over all values $c$ for which \ALG is $c$-competitive.

This work is motivated by the results by Antoniadis et al.~\cite{ABCFMNP14}, who defined this appealingly clean but ``surprisingly challenging'' (a predicate we tend to agree with) problem and also focused on the line network. Antoniadis et al.\ showed that a natural algorithm, Earliest Arrival, is not $O(1)$-competitive, and claimed the same for {three} other natural \emph{classes of} algorithms, seemingly leaving no natural algorithm as a candidate for being $O(1)$-competitive. They posed the existence of such an algorithm as an open problem and revealed that there was a modest wager on its resolution. In the first progress on this problem since their paper appeared more than a decade ago, we study the natural \GREEDY algorithm. 
Although we do not settle the wager, our results advance the understanding of this problem and identify \GREEDY as a natural candidate for being $O(1)$-competitive.

Specifically, we develop a good understanding of the case in which all packets have a \emph{length} of at most two, where the length of a packet is the number of routers that need to forward it. As detailed below, core difficulties already arise for such packets. We therefore hope that our work helps pave the way towards solving the general problem.

\paragraph*{Contributions}

Minimizing the maximum flow time on the line seems to entail a fundamental trade-off between the following two proxy objectives:
\begin{itemize}
    \item Give highest priority to packets with the earliest release times. This is the Earliest Arrival algorithm, which is not $O(1)$-competitive~\cite{ABCFMNP14}, essentially because it ignores how far the packets still have to travel.
    \item Prioritize packets by how far they still have to travel (their remaining \emph{length}). This is the Furthest-To-Go algorithm, which Antoniadis et al.\ claim is not $O(1)$-competitive~\cite{ABCFMNP14}. Indeed, a length-$1$ packet can be repeatedly delayed by a continuous stream of length-$2$ packets, even though forwarding it first would yield a maximum flow time of~$3$. The algorithm fails because it ignores when packets were released, namely the first proxy objective.
\end{itemize}

We formalize that there \emph{is} a non-trivial trade-off by providing the first general lower bound on the achievable competitive ratio. In Section~\ref{sec:generallower}, we show that even randomized algorithms cannot be better than $4/3$-competitive. Our construction repeatedly forces the algorithm to choose between the two proxy objectives. Remarkably, the lower bound requires only packets of length $1$ or~$2$.

We propose an algorithm that we term \GREEDY, which resolves the trade-off in a simple way: it prioritizes packets according to the sum of the two proxy objectives, namely how long they have been in the system plus the distance they still have to travel. (We give a different, equivalent formulation later.) In yet other words, this algorithm prioritizes the packets by the flow time they are going to get under the optimistic assumption that they are not delayed any further---therefore the name. So \GREEDY does not act by a ``proxy objective'' but by the \emph{real} objective. Viewed this way, the reader may agree that this is the most natural policy to consider.

Our main result is an analysis of \GREEDY in the case mentioned above----packets of length $1$ or~$2$. Although this may seem restrictive, the fundamental trade-off already arises in this setting. We prove that \GREEDY is $(2-\frac{1}{2^{k-1}})$-competitive, where $k$ is the number of active routers in the line network, and establish a matching lower bound for \GREEDY. These results are presented in Section~\ref{sec:Greedy}, where we begin with the lower bound to build intuition.

We conjecture that the competitive ratio of \GREEDY is constant, possibly even $2$. Indeed, we do not see how longer packets could be used to raise the lower bound on \GREEDY further. If our conjecture is true, we believe that our techniques will be useful in proving it.

\paragraph*{Related work}
Most previous work on routing algorithms under the adversarial model has focused on stability (whether the number of packets in the system remains bounded over time) and throughput maximization. We refer to~\cite{aiello2000adaptive,borodin2001adversarial,chlebus2012adversarial,leighton1994packet,ostrovsky1997universal,DBLP:conf/infocom/Patt-ShamirR19} for representative papers on stability and to~\cite{awerbuch1993throughput,bohm2021throughput,das2025approximation,hyatt2024approximations,kesselman2004buffer,liu2024online,doi:10.1137/1.9781611975482.9} for work on throughput maximization.

In the aforementioned work~\cite{ABCFMNP14}, Antoniadis et al.\ show that Earliest Arrival and Furthest-To-Go are scalable for the maximum-flow-time objective, i.e., $O(1)$-competitive with arbitrarily small speed augmentation. They also show that no $O(1)$-competitive algorithm exists for average flow time.
Havill~\cite{DBLP:conf/icalp/Havill01} considered online packet routing on bidirectional paths and rings
with the objective of minimizing the makespan. They showed that Longest In System and Moving Priority have competitive ratio 2 on linear arrays and also gave a 2-competitive algorithm for bidirectional ring routing. 
Im and Moseley~\cite{DBLP:conf/spaa/ImM15a} considered speed scaling for job scheduling on machines connected by a tree, with the objective of minimizing total flow time, and gave constant-competitive algorithms with constant speed augmentation. Disser et al.~\cite{DBLP:conf/icalp/DisserKL15} give approximation algorithms for bidirectional scheduling on a path to minimize total completion time and show that this problem is NP-hard. 
Erlebach and Jansen~\cite{667220} study the problem of establishing and completing a given set of calls as early as possible for bidirectional and directed calls in various classes of networks. Adamy et al.~\cite{DBLP:journals/algorithmica/AdamyAAE07} studied call control in rings. Given a ring network with edge capacities and a set of paths, the goal is to find a maximum-cardinality subset of the paths such that no edge capacity is violated. They give a polynomial-time algorithm to solve 
the problem optimally. 

The problem we consider can be viewed as a special case of job shop scheduling with unit processing times~\cite{DBLP:journals/mor/BansalKS06}, in which each job must be processed on a contiguous set of machines in left-to-right order.
Soukhal et al.~\cite{SOUKHAL200532} considered flow shop scheduling on two machines.
Wei and Yuan~\cite{WEI201941} considered two-machine flow shop scheduling with equal processing times. See also Chen~\cite{Chen01021995}. 

\section{Preliminaries}

We let $k$ denote the number of active routers in the line network, i.e., routers with a neighbor to their right that they can forward packets to.
Thus, the last router on the line, which can only receive packets, is ignored.
Similarly, if router $i$ is the last router that needs to forward a packet $p$, we call router $i$ the {\em last router} of $p$, even though the {\em destination} of $p$ is router $i+1$.
When a router forwards a packet, we also say that it {\em processes} the packet.

The release time of a packet $p$ is denoted $\release(p)$.
The {\em length} $\length(p)$ of $p$ is the number of routers that need to process $p$, so $\length(p) \in \{1,2\}$.
At any given time $t \ge \release(p)$,
$p$'s remaining length (the number of routers that still need to
process $p$) is denoted $\remaining(p,t)$. 
The {\em completion time}, $\compl(p)$, of a packet $p$ is the first time $t$ such that $\length(p,t)=0$, and its {\em flow time} is $\compl(p)-\release(p)$.

We define the {\em delay} of packet $p$ at time $t$ by
\[
d(p,t)=t-\release(p)-(\length(p)-\remaining(p,t))
\]
and the {\em priority} of packet $p$ at time $t$ by
\[
\prio(p,t)=\length(p)+d(p,t) = t-\release(p) + \remaining(p,t).
\]
The delay of a packet is thus the number of time steps it has spent in the system minus the number of time steps in which it has been processed so far. Consequently, once a packet's priority equals its eventual flow time, it is processed in every subsequent time step until completion.

Without loss of generality, we assume that \OPT is {\em zealous} on each router, meaning that whenever at least one packet is waiting to be processed on a router, \OPT processes a packet.

\section{The Algorithm \GREEDY}\label{sec:Greedy}
Since the priority of a packet at any point in time is a lower bound on its flow time, a natural greedy choice is to forward packets with high priority. This is what the greedy algorithm does.

\GREEDY: In each time step, every router with at least one waiting packet forwards a packet of highest priority. Ties may be broken arbitrarily. 

By providing matching upper and lower bounds, we show that the competitive ratio of \GREEDY is exactly $2-1/2^{k-1}$ for $k$ routers and packets of lengths~1 and~2.

\subsection{Lower bound for \GREEDY}

We begin with the lower bound on \GREEDY's competitive ratio, thus building intuition from concrete examples before proving the matching upper bound.
With only one router, \GREEDY is optimal, since both it and \OPT are zealous.
As a warm-up to the general result (Theorem~\ref{GreedyLower}), we first consider $k=2$ routers, using the same notation as in the proof of Theorem~\ref{GreedyLower}.

\begin{proposition}
    The competitive ratio of \GREEDY is at least $3/2$, even with only $k=2$ routers.
\end{proposition}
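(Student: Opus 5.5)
The plan is to give an explicit adversarial instance on $k=2$ routers that uses only packets of length $1$ and $2$ and is parameterized by a scale $n$. On it we aim for $\OPT \le 2n+O(1)$ and $\GREEDY \ge 3n-O(1)$. Because the definition of competitiveness allows an additive constant $b$, a single bad input is not enough. We need a family in which \OPT's cost grows with $n$ while the ratio $\GREEDY/\OPT$ tends to $3/2$; letting $n\to\infty$ then rules out every $c<3/2$.

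The construction should make \GREEDY choose wrongly between the two proxy objectives from the introduction.
\begin{itemize}
\item At time $0$, router~$1$ receives a burst of $n$ length-$2$ packets, and possibly also some length-$1$ packets that need only router~$1$.
\item Router~$2$ receives a steady stream of length-$1$ packets, one per time step, over a window starting around time $n$. This stream saturates router~$2$, so any length-$2$ packet arriving at router~$2$ during the window competes with stream packets.
\end{itemize}
The window and burst sizes are tuned so that two things happen.
\begin{itemize}
\item \OPT pushes all length-$2$ packets through router~$2$ before the stream starts. It does this by letting router~$1$ process them first, and it schedules the remaining work so that every packet has flow time at most about $2n$.
\item \GREEDY's priority rule, $\prio(p,t)=t-\release(p)+\remaining(p,t)$, makes it favour the stream packets at router~$2$ or older length-$1$ packets at router~$1$ at the critical moments. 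As a result, about half of the length-$2$ packets reach router~$2$ only after the stream has built up priority there. Those packets then wait roughly an additional $n$ steps, for a flow time of about $3n$.
\end{itemize}
Ties are broken adversarially, which the definition of \GREEDY allows; if needed, shifting release times by one step makes the preferences strict.

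The steps are as follows. First, fix the release schedule with explicit parameters: burst sizes, window start and window length, all multiples of $n$. Second, write down a concrete \OPT schedule and check that every flow time is at most $2n+O(1)$. Since each packet needs only one or two unit steps, the main check is that router~$2$ has no backlog larger than $2n$. Third, simulate \GREEDY phase by phase. In each phase, compare the priorities of the competing packets at each router, using that priority grows by exactly one per step for waiting packets and stays constant for packets being processed. From this, identify a specific packet whose completion time is at least $3n-O(1)$ after its release.

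The main obstacle is the \GREEDY simulation in the third step. Because \GREEDY adds waiting time to remaining length, delayed length-$1$ packets catch up in priority with length-$2$ packets after a single step. The adversary therefore has to time the releases so that \GREEDY's greedy choice at router~$1$, which looks locally best, sends length-$2$ packets into router~$2$ exactly when router~$2$'s queue already holds packets of equal or higher priority. I would first try a two-phase instance, one burst followed by one stream, and compute the resulting ratio. If that yields less than $3/2$, I would repeat the pattern with periodicity so that backlog accumulates at router~$2$ across phases, while checking that \OPT can still clear each phase within $2n$.
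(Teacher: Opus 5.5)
\textbf{There is a genuine gap.} What you have is a plan rather than a proof: no instance is fixed, the \GREEDY simulation is postponed, and you leave open whether the two-phase version reaches $3/2$. More seriously, the mechanism you describe is incompatible with \GREEDY's priority rule, so step three would fail.

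\begin{itemize}
\item \emph{Router~$2$ serves by release time.} Every packet waiting at router~$2$ has remaining length~$1$, hence priority $t-r(p)+1$. So on router~$2$, \GREEDY serves packets in order of release time. Length-$2$ packets released at time $0$ therefore always beat stream packets released around time $n$. They can never ``wait roughly an additional $n$ steps'' behind the stream.
\item \emph{Router~$1$ makes \OPT's choice.} If length-$1$ and length-$2$ packets are released together on router~$1$, \GREEDY prefers the length-$2$ ones (priority $t+2$ versus $t+1$), which is exactly \OPT's choice.
\item \emph{A rate-one stream builds little backlog.} A stream of one packet per step can build a backlog at router~$2$ of at most the number of length-$2$ packets that cut in. Its packets therefore get flow time at most about $n+1$.
\end{itemize}

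Putting these together: if everything on router~$1$ is released at time $0$, \GREEDY's maximum flow time is at most about $m+n+1$, where $m$ is the number of length-$1$ packets on router~$1$. \OPT is at least about $m+n$, because of the load on router~$1$. So your family gives ratio $1$. Also, shifting a length-$2$ packet's release by one step relative to a length-$1$ packet produces a tie, not a strict preference; you need two steps.

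\textbf{What is missing.}

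\begin{itemize}
\item Release $h$ length-$1$ packets on router~$1$ at time $0$ and $h$ length-$2$ packets there at time $2$. Then \GREEDY strictly prefers the older short packets (priority $t+1$ versus $t$). It sends the length-$2$ packets through router~$2$ only in steps $h+1,\dots,2h$. \OPT instead processes them first and clears router~$2$ by step $h+2$.
\item Then release a single burst of $2h$ length-$1$ packets on router~$2$ at time $h+3$.
\end{itemize}

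The victim in \GREEDY is the last packet of this burst, not a length-$2$ packet. Because the length-$2$ packets are older, they are served first. The burst is therefore processed in steps $2h+1,\dots,4h$, giving flow time $3h-2$. \OPT serves the burst in steps $h+3,\dots,3h+2$ and finishes the short packets on router~$1$ by time $2h$, so its maximum flow time is $2h$. Letting $h\to\infty$ gives $3/2$.

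The burst, rather than a stream, is essential. The burst's own backlog of $2h$, added to the roughly $h$ steps by which the length-$2$ packets are late, is what produces flow time about $3h$.
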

\begin{proof}
	Let $h\ge 4$. The input sequence $I$ contains the following packets:
	\begin{itemize}
		\item At time $0$, $h$ packets are released that need to be processed on router $1$; denote this set by $A_1$.
		\item At time $2$, $h$ packets are released that need to be processed on routers $1$ and $2$; denote this set by $B_1$.
		\item At time $h+3$, $2h$ packets are released that need to be processed on router $2$; denote this set by $B_2$.
	\end{itemize}
	Because the packets in $B_1$ are released two time steps later than those in $A_1$, \GREEDY prioritizes the packets in $A_1$ on router $1$.
	On router~$2$, \GREEDY prioritizes the packets in $B_1$ over those in $B_2$.
    Thus, \GREEDY processes $B_1$ on router $1$ during time steps $h$ to $2h-1$ and on router $2$ during time steps $h+1$ to $2h$.
	Consequently, the packets in $B_2$ are processed during time steps $2h+1$ to $4h$, resulting in a maximum flow time of $3h-2$.
	
	An optimal schedule, in contrast, achieves a maximum flow time of $2h$ by giving priority to the packets in $B_1$ on router $1$, thus processing the packets in $A_1$ during time steps $0$ to $1$ and $h+2$ to $2h-1$, the packets in $B_1$ on router $1$ during time steps $2$ to $h+1$ and on router $2$ during time steps $3$ to $h+2$, and the packets in $B_2$ during time steps $h+3$ to~$3h+2$. 
	This way, both the last packet in $A_1$ and the last packet in $B_2$ have flow time~$2h$. 

    Hence, $\GREEDY(I) = \frac{3}{2}\OPT(I)-2$, yielding a lower bound of $\frac{3}{2}$ on \GREEDY's competitive ratio.
\end{proof}

We now extend this construction to any $k\geq 2$ routers.
The construction for $k=4$ is shown in Figure~\ref{fig:lower_bound_example}.

\begin{theorem}
\label{GreedyLower}
    For $k$ routers, the competitive ratio of \GREEDY is at least $2-\frac{1}{2^{k-1}}$.
\end{theorem}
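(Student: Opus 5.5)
The plan is to generalize the $k=2$ construction of the preceding proposition by induction along the line, doubling the batch sizes at each router. Fix a large $h$. At router $1$ I release a set $A_1$ of $h$ length-$1$ packets at time $0$ and a set $B_1$ of $h$ length-$2$ packets (routers $1,2$) at time $2$, exactly as before. For each $j=2,\dots,k-1$ I release a set $B_j$ of $2^{j-1}h$ length-$2$ packets needing routers $j$ and $j+1$. At router $k$ I release a final set $B_k$ of $2^{k-1}h$ length-$1$ packets. The release time of $B_j$ is chosen so that it begins just as $\OPT$'s copies of $B_{j-1}$ finish on router $j$. For the $k=2$ case this is time $h+3$, and in general it is roughly time $(2^{j-1}-1)h + O(j)$. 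The small additive offsets (like the ``$+2$'' separating $A_1$ and $B_1$) are tuned so that, at every conflict, $\GREEDY$ strictly (or by a tie we may break adversarially) prefers the packets it has already been delaying.

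The key invariant I will prove for $\GREEDY$, by induction on $j$, is that the packets of $B_{j-1}$ reach router $j$ late: they occupy router $j$ for the $2^{j-2}h$ steps right after $B_j$ is released. Their priority there exceeds that of the fresh $B_j$ packets, because their delay is already about $(2^{j-2}-1)h$ larger than that of $B_j$. Hence $\GREEDY$ processes $B_j$ on router $j$ only after $B_{j-1}$, so $B_j$ in turn arrives at router $j+1$ with accumulated delay about $2^{j-1}h-h$ relative to its release. Unrolling this, the last packets of $B_k$ are released at time about $(2^{k-1}-1)h$, wait behind roughly $(2^{k-1}-1)h$ worth of delayed $B_{k-1}$ work, and then take $2^{k-1}h$ steps themselves. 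This gives $\GREEDY(I)\approx (2^{k-1}-1)h+2^{k-1}h=(2^k-1)h$.

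For $\OPT$ I exhibit an explicit schedule. On each router $j$ it processes the batch $B_{j-1}$ (respectively $A_1$ on router $1$) only in the gaps, giving priority to the newly arriving length-$2$ batch $B_j$ as soon as it is released, so that $B_j$ flows through routers $j,j+1$ back-to-back. The counting check is that each router receives total work $2^{j-1}h+2^{j-2}h$ within a window that lets every packet finish within $2^{k-1}h+O(k)$ of its release. By the doubling, the old batch delayed on router $j$ still meets this bound, exactly as $A_1$ achieves flow time $2h$ in the $k=2$ case. The ratio is then
\[
\frac{(2^k-1)h - O(k)}{2^{k-1}h+O(k)}\;\to\; 2-\frac{1}{2^{k-1}}
\]
as $h\to\infty$, and the additive constant $b$ in the definition of competitiveness cannot absorb this gap.

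The main obstacle is the bookkeeping of release times and additive offsets. At every router, (i) $\GREEDY$ must really prefer the delayed old batch over the new one, including at the boundary steps where priorities differ by only a constant. Also, (ii) $\OPT$'s interleaving must never make an old batch's flow time exceed $2^{k-1}h$ by more than $O(k)$. I would first write out $k=3$ explicitly, with $B_2$ of size $2h$ and $B_3$ of size $4h$, and compute the exact priorities. From that I would extract the general closed-form release times and then verify both invariants inductively.
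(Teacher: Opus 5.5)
\textbf{There is a genuine gap: your instance does not reach the claimed bound.} With only the blocks $A_1,B_1,\dots,B_k$, the lag that $A_1$ creates is passed along the line but never grows.

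Let $s_j$ be the time at which \GREEDY starts $B_j$ on router $j$. On router $j$, the only competitors of $B_j$ are the packets of $B_{j-1}$. These arrive one per step during $[s_{j-1}+1,\,s_{j-1}+|B_{j-1}|]$ and have higher priority, so $s_j=\max\{r(B_j),\,s_{j-1}+|B_{j-1}|+1\}$. Your release rule ($B_j$ is released when \OPT, having pushed $B_{j-1}$ straight through, finishes it on router $j$) gives $r(B_j)=r(B_{j-1})+|B_{j-1}|+1$. Hence $s_j-r(B_j)=s_{j-1}-r(B_{j-1})=\dots=s_1-r(B_1)=h-2$ for every $j$.

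Consequently, \GREEDY's maximum flow time is $(2^{k-1}+1)h-2$ (attained on $B_k$), while \OPT's is $2^{k-1}h$. The ratio tends to $1+2^{1-k}$; for $k=3$ this is $5h-2$ versus $4h$. Changing your offsets by constants only shifts both costs by $O(k)$.

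Your invariant already fails at $j=3$: $B_2$ occupies router~$3$ for only $h-2$ steps after $B_3$ is released, not $2h$. Your final estimate is impossible for any release times. Besides $B_k$, only the $2^{k-2}h$ packets of $B_{k-1}$ use router $k$, and \GREEDY is zealous. So a $B_k$ packet has flow time at most $|B_{k-1}|+|B_k|=3\cdot 2^{k-2}h<(2^k-1)h$ for $k\ge 3$.

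\textbf{The missing idea.} \OPT's lead must be \emph{increased} at every router. By the mechanism behind Lemma~\ref{lem:di+}, this happens only if \OPT leaves unprocessed on router $i$ some packets that \GREEDY has already processed there. In your instance, \OPT never postpones anything after router~$1$, since it processes every packet on routers $j\ge 2$ at the earliest possible time.

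The paper fixes this as follows. For $2\le i\le k-1$, it adds a block $A_i$ of length-$2$ packets, released on router $i-1$ exactly two steps before $B_i$ is released on router $i$. On router $i$, $A_i$ then has priority one higher than $B_i$, so \GREEDY processes $B_{i-1}$, $A_i$, $B_i$ in that order. \OPT instead forwards $B_i$ immediately and postpones $A_i$. Thus \GREEDY starts $B_i$ roughly $\sum_{j\le i}|A_j|$ steps after its release.

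The sizes do not double. Instead $|A_i|=2^{k-1-i}h$ halves, and $|B_i|=\sum_{j\le i}|A_j|=(2^{k-1}-2^{k-1-i})h$ increases toward $2^{k-1}h$. This gives $|A_i|+|B_i|=2^{k-1}h$, which keeps \OPT's worst flow time (that of the postponed $A_{k-1}$) at $2^{k-1}h+3$. Meanwhile, \GREEDY's flow time is $2|B_i|$ on $B_i$ for $i\ge 2$, and $|B_{k-1}|+|B_k|-1=(2^k-1)h-1$ on $B_k$.

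The release times are $r(A_i)=\sum_{j<i}|B_j|+\max\{0,i-2\}$ and $r(B_i)=r(A_i)+2$. These make \GREEDY finish $A_i$ exactly when $A_{i+1}$ is released, so the blocks chain together as required.
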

\begin{proof}
We first define a family of input sequences with parameters $k$ and $h\geq 2$. The parameter $h$ can be arbitrarily large, which is necessary to make the result asymptotic.
The sequence $I_k^h$ consists of the blocks of packets $A_1^h, A_2^h,\ldots,A_{k-1}^h$ and $B_1^h, B_2^h,\ldots,B_{k}^h$ defined below. In the remainder of the proof, we omit the superscript $h$ on the blocks $A_i^h$ and $B_i^h$.

All packets have length~2, except those in $A_1$ and $B_{k}$, which have length~1. This choice strengthens the lower bound by improving its dependence on the number of routers $k$.
The release times of the packets in blocks $A_i$ and $B_i$ are determined by the total number of packets in $B_1,B_2,\ldots,B_{i-1}$, based on:
\[r_i = \left(\sum_{j=1}^{i-1}|B_j|\right)+\max\{0,i-2\}.\]

All packets in a block are released at the same time.
Let $r(X)$ denote the release time for the packets in a block $X$.
For $1 \leq i \leq k-1$:
\begin{itemize}
\item Block $A_i$: $2^{k-1-i} \cdot h$ packets released on router $\max\{1,i-1\}$ at time $r(A_i)=r_i$.
\item Block $B_i$: $(2^{k-1} -2^{k-1-i})h$ packets released on router $i$ at time $r(B_i)=r_i+2$.
\end{itemize}
Block $B_{k}$: $2^{k-1} \cdot h$ packets released on router $k$ at time $r(B_k)=r_{k}+2$.

To compare the largest flow times achieved by \GREEDY and \OPT, respectively, we use the following identities, valid for $1 \leq i \leq k-1$:
\begin{align}
|A_i| + |B_i| & = |B_k| = 2^{k-1}\cdot h \label{eq:AplusB} \\
\sum_{j=1}^i |A_j| &= \sum_{j=1}^i 2^{k-1-j}\cdot h = \left(
2^{k-2} +\cdots +2^{k-1-i}\right)h = \left(2^{k-1} - 2^{k-1-i}\right)h = |B_i|\label{eq:sumAi}
\end{align}

For $k=4$ routers, the schedules of \GREEDY and \OPT are depicted in Figure~\ref{fig:lower_bound_example}.
First, we verify that \GREEDY's schedule is as shown.

Each $B$-block is released two time steps after the corresponding $A$-block, and $A_i$ and $B_i$ are released at least $|B_{i-1}|$ time steps after $A_{i-1}$ and $B_{i-1}$, respectively. Thus, the release times and \GREEDY's priorities $\pi$ imply the following relations at every time $t$ and on every router: 
\begin{itemize}
    \item if a packet $a$ from $A_i$ and a packet $b$ from $B_i$ are both waiting to be processed, then $\pi(a,t)>\pi(b,t)$;
    \item if a packet $b$ from $B_i$ and a packet $a'$ from $A_{i+1}$ are both waiting to be processed, then $\pi(b,t)>\pi(a',t)$. 
\end{itemize}
We show below that 
the first $B_i$ packet arrives on router $i+1$ before $B_{i+1}$ is released,
so that \GREEDY processes packets in order $A_1$, $B_1$, $A_2$, $B_2$, \dots, $B_{k-1}$, $B_k$, as depicted in Figure~\ref{fig:lower_bound_example}.
The following calculations determine the completion time and flow time of the last packet in each block for \GREEDY under this processing order.

Before the first packet in $A_i$ is processed, all blocks $A_j$ and $B_j$ with $j<i$ have already been completed. Moreover, because all packets except $A_1$ and $B_k$ have length~2, they require an additional time step to move from one router to the next. \GREEDY finishes the last packet of $A_1$ at time $e_{\GREEDY}(A_1) =2^{k-2}h$, and its
flow time is 
\begin{equation}
\label{eq:fA1}
f_{\GREEDY}(A_1)=2^{k-2}h.
\end{equation}
For $2 \leq i \leq k-1$, \GREEDY finishes the last packet of $A_i$ by time
\begin{align}
    e_{\GREEDY}(A_i) = \sum_{j=1}^i |A_j|  + \sum_{j=1}^{i-1} |B_j|+i-1. \label{eq:eAi}
\end{align}
Subtracting its release time from Equation~\eqref{eq:eAi} and using Equation~\eqref{eq:sumAi} shows that its flow time is
\begin{align}
    f_{\GREEDY}(A_i) = \sum_{j=1}^i |A_j| +1 = \left( 2^{k-1} -2^{k-1-i}\right) h +1.\label{eq:fAi}
\end{align}

For $1 \leq i \leq k-1$, \GREEDY finishes the last packet of $B_i$ by time
\begin{align}
    e_{\GREEDY}(B_i) = \sum_{j=1}^i (|A_j|  + |B_j|)+i, \label{eq:eBi}
\end{align}
because $B_i$ starts on the router where $A_i$ ends.
Subtracting its release time from Equation~\eqref{eq:eBi} shows that its flow time is
\begin{align}
  &  f_{\GREEDY}(B_1) = 2^{k-1}\cdot h -1 \text{ and} \nonumber \\ 
  &  f_{\GREEDY}(B_i) = \left(\sum_{j=1}^i |A_j|\right) +|B_i| =|B_i|+|B_i| = \left( 2^{k} -2^{k-i}\right) h, \label{eq:fBi}
\end{align}
using Equation~\eqref{eq:sumAi}.

\GREEDY finishes the last packet of $B_k$ by time
\begin{align}
    e_{\GREEDY}(B_k) = \sum_{j=1}^{k-1} (|A_j|  + |B_j|)+|B_k|+k-1, \label{eq:eBk}
\end{align}
because $B_k$ is processed on only one router.
Subtracting its release time from Equation~\eqref{eq:eBk} shows that its flow time is
\begin{align}
    f_{\GREEDY}(B_k) = \left(\sum_{j=1}^{k-1} |A_j|\right) + |B_k| - 1 = |B_{k-1}|+|B_k|-1 = \left( 2^{k} -1\right) h -1,
     \label{eq:fBk}
\end{align}
by Equation~\eqref{eq:sumAi}.

We now verify that $B_{i+1}$ is released only after \GREEDY has completed $A_{i}$ (and hence also $B_{i-1}$), ensuring that \GREEDY is processing $B_i$ on router $i+1$ when $B_{i+1}$ is released. For $1 \leq i \leq k-1$,
\begin{align*}
    r(B_{i+1}) & = \left( \sum_{j=1}^{i} |B_j|\right) + i-1 +2
    = |B_i|+\left( \sum_{j=1}^{i-1} |B_j|\right) + i+1 \\
    & =
     \left( \sum_{j=1}^{i}|A_i|+\sum_{j=1}^{i-1} |B_j|\right)+i+1,~\text{by Equation~\eqref{eq:sumAi}} \\
    & = e_{\GREEDY}(A_{i})+2,~\text{by Equation~\eqref{eq:eAi}.}
\end{align*}

Finally, since $r(A_{i+1})=r(B_{i+1})-2$, the calculation above also gives $e_{\GREEDY}(A_i) = r(A_{i+1})$. Thus,
for every $i\geq 1$, \GREEDY finishes processing $A_i$ exactly when $A_{i+1}$ is released. 

We have now verified all assumptions used above, so Equations~\eqref{eq:eAi}--\eqref{eq:fBk} are valid, as is
the depiction of \GREEDY's schedule in Figure~\ref{fig:lower_bound_example}, with the blocks scheduled one after another in order of their priorities.  Comparing the flow times for $A_i$, $B_i$, and $B_k$ from Equations~\eqref{eq:fA1}, \eqref{eq:fAi}, \eqref{eq:fBi}, and~\eqref{eq:fBk}, we obtain
\begin{align}
    \GREEDY(I_k^h) & = f_{\GREEDY}(B_k)
     =  \left( 2^k-1\right)h-1. \label{flowG}
\end{align}

We now analyze \OPT's schedule; again the reader can use Figure~\ref{fig:lower_bound_example} as a guide.
The depicted schedule for \OPT is feasible: the first packet of each $B$-block starts one time step after its release, and the $A$-blocks have much larger delays relative to their release times.
The adversary releases the $B$-packets earlier than \OPT requires in order to increase the flow time of \GREEDY: \GREEDY's maximum flow time occurs on $B_k$, whereas \OPT's occurs on $A_{k-1}$.

For $1 \leq i \leq k-1$, \OPT finishes the last packet of $B_i$ by time
\begin{align}
    e_{\OPT}(B_i) = \sum_{j=1}^i |B_j|+i+2. \label{eq:OeBi}
\end{align}
Subtracting its release time from Equation~\eqref{eq:OeBi} gives that its flow time is
\begin{align}
  &  f_{\OPT}(B_1) = 2^{k-2} h +1\text{ and} \nonumber \\ 
  &  f_{\OPT}(B_i) = |B_i|+2 = \left( 2^{k-1} -2^{k-1-i}\right) h +2. \label{eq:OfBi}
\end{align}

Block $B_k$ starts when $B_{k-1}$ finishes, at time $\sum_{j=1}^{k-1} |B_j|+(k-1)+2 = r(B_k)+1$, by Equation~\eqref{eq:OeBi}. It finishes after $|B_k|$ further time steps, so its flow time is 
\begin{align}
   f_{\OPT}(B_k) = |B_k|+1 = 2^{k-1}h+1.  \label{eq:OfBk} 
\end{align}

We next consider the blocks $A_i$.
\OPT finishes the last packet of $A_1$ at time $|A_1|+|B_1|=2^{k-1}h$.
Thus, since $r(A_1)=0$, $e_{\OPT}(A_1) = f_{\OPT}(A_1)=2^{k-1}h$.
For $2 \leq i \leq k-1$, \OPT finishes the last packet of $A_i$ by time
\begin{align}
    e_{\OPT}(A_i) & = |A_i| + e_{\OPT}(B_i) -1, \text{ since $A_i$ ends on the router where $B_i$ starts} \nonumber \\
    & = |A_i| + \left(\sum_{j=1}^i |B_j|+i+2\right)-1,~\text{by Equation~\eqref{eq:OeBi}.} \label{eq:OeAi} 
\end{align}
Subtracting its release time from Equation~\eqref{eq:OeAi} gives that its flow time is
\begin{align}
f_{\OPT}(A_i) & = |A_i| + \sum_{j=1}^{i} |B_j|+i+1 -r(A_i) \nonumber \\
    & = |A_i| + |B_i| +3 \nonumber \\
    & = 2^{k-1}\cdot h +3,
    \text{ by Equation~\eqref{eq:AplusB}.} \label{eq:OfAkminus1} 
\end{align}
This flow time is obtained for every $i\geq 2$, and in particular for $A_{k-1}$.

Comparing Equations~\eqref{eq:OfBi}, \eqref{eq:OfBk}, and~\eqref{eq:OfAkminus1}, we see that \OPT's flow time for $A_{k-1}$ is slightly larger than its flow time for $B_k$. Therefore,
\begin{align}
    \OPT(I_k^h) & = f_{\OPT}(A_{k-1})
     =  2^{k-1}\cdot h + 3. \label{flowO}
\end{align}

Solving Equation~\eqref{flowO} for $h$ gives
\begin{align*}
    h = \frac{\OPT(I_k^h)-3}{2^{k-1}}.
\end{align*}

Then,
\begin{align*}
    \GREEDY(I_k^h) & = \left( 2^k -1\right) h-1,~\text{by Equation~\eqref{flowG}}  \\
    & = \left( 2^k -1\right) \frac{\OPT(I_k^h)-3}{2^{k-1}}-1  \\
    & = \left( 2-\frac{1}{2^{k-1}}\right)\OPT(I_k^h) -3\left(2-\frac{1}{2^{k-1}}\right)-1.
\end{align*}

Since $\OPT(I_k^h)$ grows without bound as $h$ increases, the competitive ratio of \GREEDY cannot be smaller than $2-\frac{1}{2^{k-1}}$.
\end{proof}

Note that in the proof of Theorem~\ref{GreedyLower}, the cost of \OPT is
never greater than that of \GREEDY on any prefix of $I_k^h$, so \OPT's schedule is an online-bounded optimal~\cite{BEFLL18} solution for \GREEDY. Thus, \GREEDY has an online-bounded ratio of at least 
$2-\frac{1}{2^{k-1}}$. In online-bounded analysis, the offline optimal algorithm that an online algorithm competes against is constrained to never perform worse than the online algorithm on any prefix, and therefore cannot fully exploit its knowledge of when the input ends.

\begin{figure}[tpb]
\centering
\begin{tikzpicture}[xscale=1.25,yscale=1.00]
\draw[fill=blue!13] (0, 0) -- (1, 0) -- (1, -2.0) -- (0, -2.0) -- (0, 0);
\node[anchor = north west] at (0, 0) {\footnotesize $0$};
\node[anchor = south east] at (1, -2.0) {\footnotesize $4h$};
\node at (0.5, -1.0) {\footnotesize $\bm{4h}$};
\draw[fill=red!13] (0, -2.0) -- (1, -2.0) -- (1, -2.1) -- (2, -2.1) -- (2, -4.1) -- (1, -4.1) -- (1, -4.0) -- (0, -4.0) -- (0, -2.0);
\node[anchor = north west] at (0, -2.0) {\footnotesize $2$};
\node[anchor = south east] at (2, -4.1) {\footnotesize $8h+1$};
\node at (1, -3.0) {\footnotesize $\bm{8h-1}$};
\draw[fill=blue!25] (0, -4.0) -- (1, -4.0) -- (1, -4.1) -- (2, -4.1) -- (2, -5.1) -- (1, -5.1) -- (1, -5.0) -- (0, -5.0) -- (0, -4.0);
\node[anchor = north west] at (0, -4.0) {\footnotesize $4h$};
\node[anchor = south east] at (2, -5.1) {\footnotesize $10h+1$};
\node at (1, -4.5) {\footnotesize $\bm{6h+1}$};
\draw[fill=red!25] (1, -5.1) -- (2, -5.1) -- (2, -5.199999999999999) -- (3, -5.199999999999999) -- (3, -8.2) -- (2, -8.2) -- (2, -8.1) -- (1, -8.1) -- (1, -5.1);
\node[anchor = north west] at (1, -5.1) {\footnotesize $4h+2$};
\node[anchor = south east] at (3, -8.2) {\footnotesize $16h+2$};
\node at (2, -6.6) {\footnotesize $\bm{12h}$};
\draw[fill=blue!37] (1, -8.1) -- (2, -8.1) -- (2, -8.2) -- (3, -8.2) -- (3, -8.7) -- (2, -8.7) -- (2, -8.6) -- (1, -8.6) -- (1, -8.1);
\node[anchor = north west] at (1, -8.1) {\footnotesize $10h+1$};
\node[anchor = south east] at (3, -8.7) {\footnotesize $17h+2$};
\node at (0.5, -8.35) {\footnotesize $\bm{7h+1}$};
\draw[fill=red!37] (2, -8.7) -- (3, -8.7) -- (3, -8.799999999999999) -- (4, -8.799999999999999) -- (4, -12.299999999999999) -- (3, -12.299999999999999) -- (3, -12.2) -- (2, -12.2) -- (2, -8.7);
\node[anchor = north west] at (2, -8.7) {\footnotesize $10h+3$};
\node[anchor = south east] at (4, -12.299999999999999) {\footnotesize $24h+3$};
\node at (3, -10.45) {\footnotesize $\bm{14h}$};
\draw[fill=red!49] (3, -12.3) -- (4, -12.3) -- (4, -16.2) -- (3, -16.2) -- (3, -12.3);
\node[anchor = north west] at (3, -12.3) {\footnotesize $17h+4$};
\node[anchor = south east] at (4, -16.2) {\footnotesize $32h+3$};
\node at (3.5, -14.25) {\footnotesize $\bm{15h-1}$};
\draw[fill=blue!13] (3.5, 0) -- (4.5, 0) -- (4.5, -0.2) -- (3.5, -0.2) -- (3.5, 0);
\node[anchor = north west] at (3.5, 0) {\footnotesize $$};
\node[anchor = south east] at (4.5, -0.2) {\footnotesize $$};
\node at (4.0, -0.1) {\footnotesize $\bm{}$};
\draw[fill=red!13] (3.5, -0.2) -- (4.5, -0.2) -- (4.5, -0.30000000000000004) -- (5.5, -0.30000000000000004) -- (5.5, -2.3000000000000003) -- (4.5, -2.3000000000000003) -- (4.5, -2.2) -- (3.5, -2.2) -- (3.5, -0.2);
\node[anchor = north west] at (3.5, -0.2) {\footnotesize $2$};
\node[anchor = south east] at (5.5, -2.3000000000000003) {\footnotesize $4h+3$};
\node at (4.5, -1.2) {\footnotesize $\bm{4h+1}$};
\draw[fill=blue!13] (3.5, -2.2) -- (4.5, -2.2) -- (4.5, -4.0) -- (3.5, -4.0) -- (3.5, -2.2);
\node[anchor = north west] at (3.5, -2.2) {\footnotesize $0$};
\node[anchor = south east] at (4.5, -4.0) {\footnotesize $8h$};
\node at (4.0, -3.1) {\footnotesize $\bm{8h}$};
\draw[fill=red!25] (4.5, -2.3) -- (5.5, -2.3) -- (5.5, -2.4) -- (6.5, -2.4) -- (6.5, -5.3999999999999995) -- (5.5, -5.3999999999999995) -- (5.5, -5.3) -- (4.5, -5.3) -- (4.5, -2.3);
\node[anchor = north west] at (4.5, -2.3) {\footnotesize $4h+2$};
\node[anchor = south east] at (6.5, -5.3999999999999995) {\footnotesize $10h+4$};
\node at (5.5, -3.8) {\footnotesize $\bm{6h+2}$};
\draw[fill=blue!25] (3.5, -4.0) -- (4.5, -4.0) -- (4.5, -5.0) -- (3.5, -5.0) -- (3.5, -4.0);
\draw[fill=blue!25] (4.5, -5.3) -- (5.5, -5.3) -- (5.5, -6.3) -- (4.5, -6.3);
\node[anchor = north west] at (3.5, -4.0) {\footnotesize $4h$};
\node[anchor = south east] at (5.5, -6.3) {\footnotesize $12h+3$};
\node at (5.0, -5.7) {\footnotesize $\bm{8h+3}$};
\draw[fill=red!37] (5.5, -5.4) -- (6.5, -5.4) -- (6.5, -5.5) -- (7.5, -5.5) -- (7.5, -9.0) -- (6.5, -9.0) -- (6.5, -8.9) -- (5.5, -8.9) -- (5.5, -5.4);
\node[anchor = north west] at (5.5, -5.4) {\footnotesize $10h+3$};
\node[anchor = south east] at (7.5, -9.0) {\footnotesize $17h+5$};
\node at (6.5, -7.15) {\footnotesize $\bm{7h+2}$};
\draw[fill=blue!37] (4.5, -6.3) -- (5.5, -6.3) -- (5.5, -6.8) -- (4.5, -6.8) -- (4.5, -6.3);
\draw[fill=blue!37] (5.5, -8.9) -- (6.5, -8.9) -- (6.5, -9.4) -- (5.5, -9.4);
\node[anchor = north west] at (4.5, -6.3) {\footnotesize $10h+1$};
\node[anchor = south east] at (6.5, -9.4) {\footnotesize $18h+4$};
\node at (5.0, -9.15) {\footnotesize $\bm{8h+3}$};
\draw[fill=red!49] (6.5, -9.0) -- (7.5, -9.0) -- (7.5, -12.9) -- (6.5, -12.9) -- (6.5, -9.0);
\node[anchor = north west] at (6.5, -9.0) {\footnotesize $17h+4$};
\node[anchor = south east] at (7.5, -12.9) {\footnotesize $25h+5$};
\node at (7.0, -10.95) {\footnotesize $\bm{8h+1}$};
\draw[color=gray] (-0.25, -0.0) -- (5.0, -0.0);
\node[anchor = east, fill = blue!13] at (-0.25, -0.0) {\footnotesize \begin{tabular}{@{}c@{}}$A_1$\\$0$\end{tabular}};
\draw[color=gray] (-0.25, -2.0) -- (6.0, -2.0);
\node[anchor = east, fill = blue!25] at (-0.25, -2.0) {\footnotesize \begin{tabular}{@{}c@{}}$A_2$\\$4h$\end{tabular}};
\draw[color=gray] (-0.25, -5.1) -- (7.0, -5.1);
\node[anchor = east, fill = blue!37] at (-0.25, -5.1) {\footnotesize \begin{tabular}{@{}c@{}}$A_3$\\$10h+1$\end{tabular}};
\draw[color=gray] (0, -0.2) -- (7.75, -0.2);
\node[anchor = west, fill = red!13] at (7.75, -0.2) {\footnotesize \begin{tabular}{@{}c@{}}$B_1$\\$2$\end{tabular}};
\draw[color=gray] (0, -2.2) -- (7.75, -2.2);
\node[anchor = west, fill = red!25] at (7.75, -2.2) {\footnotesize \begin{tabular}{@{}c@{}}$B_2$\\$4h+2$\end{tabular}};
\draw[color=gray] (0.5, -5.3) -- (7.75, -5.3);
\node[anchor = west, fill = red!37] at (7.75, -5.3) {\footnotesize \begin{tabular}{@{}c@{}}$B_3$\\$10h+3$\end{tabular}};
\draw[color=gray] (1.5, -8.9) -- (7.75, -8.9);
\node[anchor = west, fill = red!49] at (7.75, -8.9) {\footnotesize \begin{tabular}{@{}c@{}}$B_4$\\$17h+4$\end{tabular}};
\end{tikzpicture}
\caption{Example of the lower-bound construction for $k=4$ routers. The small boxes on the left and right give the release times of the blocks; the choice of side is only for readability and has no algorithmic significance. The horizontal guide lines indicate the corresponding release times. In the center, \GREEDY's schedule is shown on the left and \OPT's on the right. A block that is open to the left indicates that its processing started on the previous router. Within each block, the release time appears at the upper left, the completion time at the lower right, and the flow time in the center.}
\label{fig:lower_bound_example}
\end{figure}

\subsection{Upper bound for \GREEDY}
We now turn to the main result of the paper, the upper bound for \GREEDY, beginning with some definitions.
A packet $p$ is \emph{alive} for a given algorithm at any time $t$ such that $t \geq \release(p)$ and $\length(p,t)>0$.

Fix an optimal offline algorithm, \OPT.
\begin{definition} 
\label{def:a_and_g}
	Let $a_i(t)$ be the number of alive packets in \OPT's schedule that still need to be processed on router $i$ at time $t$. Similarly, let $g_i(t)$ be the number of alive packets that \GREEDY still needs to process on router $i$ at time $t$.
\end{definition}
The optimal flow time is bounded below by the maximum value of $a_i(t)$ over all time steps and routers. 

\begin{definition}
\label{def:delta}
    We define $\Delta_i(t)=g_i(t)-a_i(t)$ for all routers $i$ and times $t$.
\end{definition}
Thus, $\Delta_i(t)$ is the number of packets by which \OPT is ahead on router $i$ at time $t$: it is the number of packets that \OPT has processed on router $i$ before time $t$ minus the number that \GREEDY has processed there before time $t$. 
The value $\Delta_i(t)$ may be positive or negative.

\begin{lemma}
    \label{lem:1}
    For each time $t\geq 0$, we have $\Delta_1(t)=0$.
\end{lemma}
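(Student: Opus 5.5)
We will prove the statement by induction on $t$, using only that both \GREEDY and \OPT are zealous on router~$1$ and that router~$1$ receives packets solely through releases, never from another router. The key point is that the workload on router~$1$ evolves identically in both schedules. Which packet is forwarded may differ, but the number of packets still needing router~$1$ is the same.

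First we record how $a_1$ and $g_1$ evolve. Let $n_1(t)$ be the number of packets released at time $t$ whose first router is router~$1$. Packets can only arrive at a router from its left neighbour or by release, and router~$1$ has no left neighbour. Hence the set of packets that ever need processing on router~$1$ is exactly the set of packets released there, and it is the same for both algorithms. In one time step, router~$1$ processes exactly one packet if at least one alive packet is waiting there, and none otherwise; this is zealousness, assumed for \OPT without loss of generality and holding for \GREEDY by definition. Thus both quantities satisfy the same recurrence
\[
x(t+1) = \max\{x(t)-1,0\} + n_1(t+1),
\]
with the same initial value $x(0)=n_1(0)$. The exact timing convention, whether releases at time $t$ count toward $x(t)$ before that step's processing, just needs to be fixed consistently with Definition~\ref{def:a_and_g}.

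Next we apply induction. At $t=0$, no processing has happened, so $a_1(0)=g_1(0)=n_1(0)$. If $a_1(t)=g_1(t)$, the recurrence gives $a_1(t+1)=g_1(t+1)$, and therefore $\Delta_1(t)=0$ for all $t$. Equivalently, in the paper's interpretation of $\Delta_1(t)$ as the difference in the number of packets processed on router~$1$ before time $t$: both algorithms process a packet on router~$1$ at exactly the same time steps, namely those where the common count is positive.

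The only point requiring care is the justification that packets on router~$1$ cannot depend on the schedule. Packets of length~2 starting at router~$1$ move on to router~$2$ and affect only $a_2$ and $g_2$, so which length-1 or length-2 packet is chosen on router~$1$ does not change the count. We do not expect any real obstacle beyond stating the timing convention precisely.
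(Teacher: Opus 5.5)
Your proof is correct and takes the same approach as the paper: both algorithms are zealous, and router~1 receives packets only through releases, so the number of packets awaiting router~1 evolves identically in both schedules. Your recurrence $x(t+1)=\max\{x(t)-1,0\}+n_1(t+1)$, with induction on $t$, just makes the paper's one-line argument explicit.
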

\begin{proof}
    Both \OPT and \GREEDY are zealous, and packets arrive at router~1 at the same times for both algorithms. Therefore, the number of alive packets on router~1 is always the same, and $\Delta_1(t)=0$. 
\end{proof}

We wish to bound $\Delta_i(t)$, the number of packets by which \OPT is ahead on router $i$ at time $t$.
This is the purpose of Lemmas~\ref{lem:di+}--\ref{lemma:DeltaUpper}. 
The basic idea of the proof is the following:
Each unit by which $\Delta_i$ exceeds $\Delta_{i-1}$ corresponds to a packet whose last router is $i-1$ and which \GREEDY has processed there but \OPT has not. The reason \OPT did not process it is that it was processing another packet released on router $i-1$ that contributes to its lead on router $i$. If there are $\ell$ such packets, then \OPT still has to process $\ell$ packets on router $i-1$. Each of these packets must have priority at least that of the packets \OPT processed, because \GREEDY preferred them. Processing them takes \OPT $\ell$ additional steps, giving a lower bound on its cost.

    \begin{lemma}
		\label{lem:di+}
		If $\Delta_i(t+1)=\Delta_i(t)+1 
		> \Delta_{i-1}(t)\ge0$, then, at time $t$, there are $\Delta_i(t+1)-\Delta_{i-1}(t)$ packets with current priority at least $\Delta_i(t+1) + 1$ that \OPT still needs to process on router $i-1$.
	\end{lemma}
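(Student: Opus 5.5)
The plan is to reformulate $\Delta_i$ as a difference of processing counts, extract a set $X$ of length-2 packets that \OPT has already pushed through router $i$ but \GREEDY has not, and then charge these against packets that \GREEDY preferred on router $i-1$. That preference is what gives the priority bound.

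\textbf{Step 1 (reformulation).} Since $g_i(t)$ and $a_i(t)$ count all alive packets that still need router $i$, including those still waiting upstream at router $i-1$, new releases raise both counts equally. Hence
\[
\Delta_i(t)=N^{\OPT}_i(t)-N^{\GREEDY}_i(t),
\]
where $N_i(t)$ is the number of packets processed on router $i$ before time $t$. The hypothesis $\Delta_i(t+1)=\Delta_i(t)+1$ therefore means that at time $t$, \OPT processes a packet on router $i$ while \GREEDY does not. Because \GREEDY is zealous, \GREEDY has no packet physically present at router $i$ at time $t$. Every packet released on router $i$ by time $t$ has thus been finished by \GREEDY there.

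\textbf{Step 2 (the set $X$).} Let $X$ be the set of packets processed by \OPT on router $i$ before $t+1$ but not by \GREEDY. Then $|X|\ge\Delta_i(t+1)$, and every $x\in X$ has length~2 with first router $i-1$. Each such $x$ was processed by \OPT on $i-1$ before $t$. It was not processed by \GREEDY on $i-1$ before $t$, since otherwise it would be sitting at router $i$ at time $t$, contradicting idleness. So each $x\in X$ is waiting in \GREEDY's queue at router $i-1$ from $\release(x)$ through $t$.

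Let $x_0\in X$ be the packet \OPT processes earliest on router $i$. All of $X$ is processed by \OPT on router $i$ at distinct times in $[\release(x_0)+1,t]$, so $t-\release(x_0)\ge|X|\ge\Delta_i(t+1)$. Moreover, $\prio(x_0,t)=t-\release(x_0)+2$ in \GREEDY's schedule.

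\textbf{Step 3 (charging on router $i-1$).} I restrict attention to the window $W=[\release(x_0),t)$. In $W$, \GREEDY is busy on $i-1$ at every step, because $x_0$ waits there. Let $Y$ be the set of packets that \GREEDY processes on $i-1$ during $W$ but that \OPT has not processed on $i-1$ by time $t$. I then compare the window counts:
\begin{itemize}
\item \GREEDY processes $|W|$ packets on $i-1$ during $W$.
\item \OPT processes at most $|W|$ packets there, and among them are all of $X$ (those released after $\release(x_0)$ are processed inside $W$; those released earlier need a small extra argument).
\item Combining this with the global difference $\Delta_{i-1}(t)\ge 0$ should give $|Y|\ge|X|-\Delta_{i-1}(t)\ge\Delta_i(t+1)-\Delta_{i-1}(t)$.
\end{itemize}
For each $y\in Y$ processed by \GREEDY at time $s\in W$, the waiting packet $x_0$ shows $\prio(y,s)\ge\prio(x_0,s)$. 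Since \OPT has not processed $y$ on $i-1$ by time $t$, its remaining length in \OPT's schedule at time $t$ is still $\length(y)$. Its priority therefore grows by exactly $t-s$, giving
\[
\prio(y,t)\ge\prio(x_0,s)+(t-s)=t-\release(x_0)+2\ge\Delta_i(t+1)+2,
\]
which is more than the required $\Delta_i(t+1)+1$.

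\textbf{Main obstacle.} I expect the hard part to be the window accounting in Step 3. The quantity $\Delta_{i-1}(t)$ compares counts over all of $[0,t)$, while the priority argument only works for \GREEDY's choices inside $W$. Packets \GREEDY processed before $\release(x_0)$ that \OPT has not yet processed cannot be compared with $x_0$. Some members of $X$ may also have been processed by \OPT on $i-1$ before $W$.

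My first attempt would be to choose $x_0$ differently, taking the element of $X$ with the earliest release time, so that all of $X$ lies in \GREEDY's queue throughout $W$. I would then split \OPT's processing on $i-1$ into the parts before $W$ and inside $W$. The part before $W$ should be controlled using that \GREEDY and \OPT both run zealously from a common starting point. If that fails, I would use the slack of one unit in the priority bound to absorb the off-by-one losses at the window boundary.
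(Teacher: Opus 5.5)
\textbf{There is a genuine gap, at exactly the step you flag, and it is not an off-by-one loss.} Your setup matches the paper's: the set $X$, the element $x_0$, the bound $t-\release(x_0)\ge|X|$, and the priority comparison with the waiting $x_0$. But restricting the charging set to $Y$, the packets \GREEDY processes on router $i-1$ during $W=[\release(x_0),t)$, makes the inequality $|Y|\ge|X|-\Delta_{i-1}(t)$ false in general. Neither zealousness nor the unit of slack in the priority bound repairs it.

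Here is a counterexample. Take $i=2$ and $m\ge2$. Release $2m$ length-1 packets on router 1 at time $0$, and $m$ length-2 packets on router 1 at time $m$. \GREEDY processes all short packets at times $0,\dots,2m-1$, since they have strictly higher priority, and then the long ones. A zealous \OPT can proceed as follows:
\begin{itemize}
\item first, during $[0,m)$, process the $m$ short packets that \GREEDY handles during $[m,2m)$;
\item then process the long packets as they arrive, on router 2 during $[m+1,2m]$;
\item then process the remaining $m$ short packets.
\end{itemize}
Appending, say, $4m$ length-1 packets on router 1 at time $10m$ makes this schedule optimal. At $t=2m$ we have $\Delta_2(t+1)=\Delta_2(t)+1=m>\Delta_1(t)=0$, and $X$ is the set of long packets. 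Yet every packet \GREEDY processes in $W=[m,2m)$ was already processed by \OPT before time $m$, so $Y=\emptyset$ while $m$ packets are required. The actual witnesses are the short packets \GREEDY processed \emph{before} $\release(x_0)$. Your window excludes them, and no zealousness argument can rule such packets out.

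\textbf{The fix is to drop the window.} Let $G$ and $O$ be the sets of packets processed on router $i-1$ before $t$ by \GREEDY and \OPT, respectively. Then $|O|=|G|+\Delta_{i-1}(t)$ and $X\subseteq O\setminus G$ give $|G\setminus O|=|O\setminus G|-\Delta_{i-1}(t)\ge|X|-\Delta_{i-1}(t)$ by pure counting.

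For the priority bound, take $x_0$ of minimum release time in $X$. Split each $q\in G\setminus O$ by the time $s$ at which \GREEDY processed it:
\begin{itemize}
\item If $s\ge\release(x_0)$, your comparison with the waiting $x_0$ applies.
\item If $s<\release(x_0)$, then trivially $\release(q)\le s<\release(x_0)$, so $q$ is at least one step older than $x_0$. Since \OPT has not processed $q$ on router $i-1$, its remaining length is at least $1$. Hence its priority at time $t$ is at least $t-\release(x_0)+2$ anyway.
\end{itemize}
This is the paper's argument.

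One small correction to your Step 3. For a packet whose first router is $i-2$, its remaining length in \OPT's schedule at time $t$ need not equal $\length(y)$. What you need, and what holds, is that it is at least its remaining length in \GREEDY's schedule at time $s$.
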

	\begin{proof}
		The equality $\Delta_i(t+1)=\Delta_i(t)+1$ implies that \GREEDY is idle on router $i$ at time $t$, so \GREEDY has processed all packets released on router $i$. 
        Thus, by time $t$, \OPT has processed at least $\Delta_i(t+1)$ length-2 packets that were released on router $i-1$ and that \GREEDY has not yet processed there. 
        Let $P$ be this set of packets, and let $p$ be a packet of minimum release time in $P$. Note that 
        \[t-r(p) \geq |P| \geq \Delta_i(t+1),\]
        since \OPT processes $P$ on router $i-1$ between $r(p)$ and $t$.
        
        Now consider the sets of packets that \GREEDY and \OPT process on router $i-1$ before time $t$.
		Call these sets $G$ and $O$, respectively, and note that
        \begin{align}\label{eq:sizeO}
            |O|=|G|+\Delta_{i-1}(t).
        \end{align}
        Moreover, since $P \subseteq O \setminus G$, 
        \begin{align}\label{eq:sizeO-G}
        |O\setminus G|\ge|P|\ge \Delta_{i}(t+1),
        \end{align}
        so 
        \begin{align*}
        |G \setminus O| 
        &= |O\setminus G| - \Delta_{i-1}(t), \text{ by \eqref{eq:sizeO} and removing $O\cap G$ from both $O$ and $G$}\\
        &\geq \Delta_i(t+1)-\Delta_{i-1}(t), \text{ by \eqref{eq:sizeO-G}}.
        \end{align*}
Since all packets in $P$ are released on router $i-1$, they become available to \GREEDY there immediately upon release. 
Therefore, by the definition of \GREEDY and the fact that $p$ has length $2$, each packet in $G \setminus O$ is at least as old as $p$ and of length 2 or is at least one time step older than $p$.
It follows that, at time $t$, \OPT has at least $\Delta_i(t+1)-\Delta_{i-1}(t)$ packets, each with priority at least $\Delta_i(t+1)+1$, that still need to be processed on router $i-1$.
	\end{proof}

    For an illustration of Lemma~\ref{lem:di+}, see the lower-bound construction depicted
in Figure~\ref{fig:lower_bound_example}. This input sequence uses the blocks defined at the beginning of the proof of Theorem~\ref{GreedyLower}. Letting $i=4$, we consider \[t=17h+2,\] since
$17h+2$ is the last time before completion at which \GREEDY is idle on
router~4. 
Since \OPT processes the last two packets of $B_3$ at times $t=17h+2$ and $t+1=17h+3$ on router~$3$, the set $P$ consists of the first $7h-2$ packets in $B_3$, and 
\[\Delta_4(t+1)=\Delta_4(17h+3)= 7h-2.\]
The oldest
packet $p$ in $P$ was released at time \[r(p)=10h+3\] (all packets in $P$ have
this release time). By this time, \OPT has completed all but the last packet of $B_2$, while \GREEDY has completed only the first packet of $B_2$, so \OPT is ahead of
\GREEDY by \[ \Delta_3(r(p))=6h-2\] packets from $B_2$ on router~3.

At time $r(p)$ on router~3, \OPT has not yet begun
processing any of the $h$ packets in $A_3$ on router~3. At time $t=17h+2$, each has
priority \[ \pi(A_3,t)= 2+(17h+2)-(10h+1)-(2-1)=7h+2. \] These are the packets in $G\setminus O$ from Lemma~\ref{lem:di+}. According to the lemma, the
last packet in $A_3$ that \OPT processes on router~3 has priority
(and flow time) at least $\Delta_4(t+1)=7h-2$, and there are at least
\[ |A_3| = \Delta_4(t+1)-\Delta_3(t) = (7h-2)-(6h-2) = h \] packets in $A_3$. In this example, the number of packets remaining for \OPT on router~3 matches the lower bound in the lemma, and the priority $7h+2$ exceeds the guaranteed value $\Delta_4(t+1)+1=7h-1$. As noted
below in the proof of Lemma~\ref{lemma:DeltaUpper}, at least one of these packets has flow time at least
\[ 2\Delta_4(t+1) - \Delta_3(t)= 2(7h-2)-(6h-2)=8h-2 \] in \OPT's schedule
(its actual flow time is $8h+3$).

	\begin{lemma}\label{lemma:DeltaUpper}
		For any router $i \ge 1$, \[\max_{t \ge 0, j \le i} \Delta_j(t) \le \left(1-\frac{1}{2^{i-1}}\right)\OPT.\]
	\end{lemma}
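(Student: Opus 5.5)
We prove the statement by induction on $i$, writing $D_i=\max_{t\ge 0,\,j\le i}\Delta_j(t)$ and aiming for $D_i\le (1-2^{1-i})\OPT$. The base case $i=1$ is Lemma~\ref{lem:1}: $\Delta_1\equiv 0$, so $D_1=0$. For the inductive step, assume $D_{i-1}\le(1-2^{2-i})\OPT$. It suffices to bound $\Delta_i(t)$ for every $t$. Since $\Delta_i(0)=0$ and $\Delta_i$ changes by at most one per step, we fix a time where $\Delta_i$ attains a new positive value $M$. That is, we choose $t$ with $\Delta_i(t+1)=\Delta_i(t)+1=M>0$, and we only need to show $M\le(1-2^{1-i})\OPT$.

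We split into two cases according to $\Delta_{i-1}(t)$. If $\Delta_{i-1}(t)\ge M$, then $M\le D_{i-1}$ and the induction hypothesis already gives the bound. Otherwise, $M>\Delta_{i-1}(t)$. If $\Delta_{i-1}(t)<0$, we first move to a nearby time or replace $\Delta_{i-1}(t)$ by $0$; the cleanest route is to observe that the counting argument in Lemma~\ref{lem:di+} only becomes stronger when $\Delta_{i-1}(t)$ is smaller, since then $|G\setminus O|\ge M$. So we may assume $M>\Delta_{i-1}(t)\ge 0$ and apply Lemma~\ref{lem:di+}. At time $t$, \OPT still has $M-\Delta_{i-1}(t)$ packets to process on router $i-1$, each with current priority at least $M+1$.

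\OPT needs $M-\Delta_{i-1}(t)$ distinct time steps, starting at time $t$, to process these packets on router $i-1$. The last of them is therefore forwarded there no earlier than time $t+M-\Delta_{i-1}(t)-1$. Its delay grows by one in every step in which it waits, so its flow time in \OPT is at least
\[
M+1+(M-\Delta_{i-1}(t)-1)=2M-\Delta_{i-1}(t),
\]
which matches the value $2\Delta_4(t+1)-\Delta_3(t)$ highlighted in the example. Hence $\OPT\ge 2M-\Delta_{i-1}(t)$, which gives
\[
M\le \tfrac12\bigl(\OPT+\Delta_{i-1}(t)\bigr)\le \tfrac12\bigl(\OPT+(1-2^{2-i})\OPT\bigr)=(1-2^{1-i})\OPT.
\]
Combined with $D_{i-1}\le(1-2^{2-i})\OPT\le(1-2^{1-i})\OPT$, this yields $D_i\le(1-2^{1-i})\OPT$ and completes the induction.

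The main obstacle is the flow-time accounting in the second case. One must check carefully that a packet with current priority $\pi$ which \OPT completes on router $i-1$ only after waiting $w$ further steps really has flow time at least $\pi+w$. This holds because priority equals elapsed time plus remaining length, and for a length-2 packet whose last router is $i$ the remaining processing adds to this rather than subtracts. One must also check that the $M-\Delta_{i-1}(t)$ packets cannot share time steps. A secondary point is the negative-$\Delta_{i-1}(t)$ case: Lemma~\ref{lem:di+} assumes $\Delta_{i-1}(t)\ge 0$, so either its proof must be rerun verbatim with $|G\setminus O|\ge M-\Delta_{i-1}(t)\ge M$, or the bound $2M\le\OPT+0$ must be used directly. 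Either way the resulting bound $M\le\OPT/2$ is within the target.
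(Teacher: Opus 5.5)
Your proof is correct and follows the paper's argument. Both proceed by induction on $i$. Lemma~\ref{lem:di+} is applied at a time $t$ when $\Delta_i$ first reaches a new value $M$, and the $M-\Delta_{i-1}(t)$ backlogged packets of priority at least $M+1$ on router $i-1$ give $\OPT\ge 2M-\Delta_{i-1}(t)$, which is then combined with the induction hypothesis. Your explicit handling of the cases $\Delta_{i-1}(t)\ge M$ and $\Delta_{i-1}(t)<0$ is a genuine improvement, since the paper invokes Lemma~\ref{lem:di+} without verifying its hypotheses; in the second case the lemma's counting argument does go through unchanged.
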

	\begin{proof}
		We prove the lemma by induction on $i$.
		For $i=1$, the claim follows from Lemma~\ref{lem:1}, establishing the base case.
		
		For the inductive step, fix a router $j \geq 2$ and let $t+1$ be the first time at which the maximum value of $\Delta_j$ is attained. 
		By Lemma~\ref{lem:di+}, at time $t$ there are $\Delta_j(t+1)-\Delta_{j-1}(t)>0$ packets that \OPT still needs to process on router $j-1$ and whose priority is already at least $\Delta_j(t+1)+1$.
        The first of these packets may have flow time $\Delta_j(t+1)+1$, but the last has flow time at least $2\Delta_j(t+1)-\Delta_{j-1}(t)$.
		Thus, 
		\begin{align*}
			\OPT & \ge 2\Delta_j(t+1) - \Delta_{j-1}(t) \\
			& \ge 2\Delta_j(t+1) - (1-2^{2-j})\OPT, \text{ by the induction hypothesis}\\
			& \ge 2\Delta_j(t+1) - (1-2^{2-i})\OPT, \text{ since } j \le i.
		\end{align*}
        Hence,
            \[(1-2^{1-i})\OPT \ge \Delta_j(t+1).\]
        This completes the induction.
	\end{proof}

\begin{theorem}
  If there are $k$ routers and all packets have length~1 or~2,
  \GREEDY is $\left(2-\frac{1}{2^{k-1}} \right)$-competitive.
\end{theorem}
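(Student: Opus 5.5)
We will bound the flow time of an arbitrary packet $p$ in \GREEDY's schedule. The plan is to show that it is at most $\OPT$ plus the lead $\Delta_i(t)$ of \OPT on the router where $p$ is delayed. Lemma~\ref{lemma:DeltaUpper} then gives $\Delta_i(t)\le(1-2^{1-k})\OPT$ for every router $i\le k$, and summing yields $\GREEDY\le(2-2^{1-k})\OPT+O(1)$.

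\textbf{Reduction to the packet's last router.} Fix the packet $p$ attaining \GREEDY's maximum flow time, with last router $i$. On routers before $i$, \GREEDY processes packets in priority order, so any delay $p$ suffers there is still visible in its priority. We therefore consider the time $s$ at which $p$ becomes available on router $i$, and the last time $t_0\le s$ such that at every step in $[t_0,C(p))$ \GREEDY's router $i$ is busy with packets whose priority is at least $\pi(p,\cdot)$ at the corresponding time. Such a step exists: \GREEDY processes $p$ only after all higher-priority packets present. We choose $t_0$ maximal with the property that, just before $t_0$, router $i$ was idle or processed a packet of lower priority than $p$ would have had.

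\textbf{Counting the busy period.} Let $Q$ be the set of packets \GREEDY processes on router $i$ during $[t_0,C(p))$. Each packet in $Q$ has priority at least $\pi(p,t)$ at the time $t$ it is chosen. Hence, if \OPT completed all of them, the last one, finishing on router $i$, would have flow time at least roughly $\pi(p,t_0)+|Q|-(\text{work \OPT already did on } Q \text{ before } t_0)$. That prior work is bounded by the number of packets \OPT is ahead on router $i$ at $t_0$, namely $\Delta_i(t_0)$. This gives a relation of the form
\[
C(p)-r(p)\le \OPT+\Delta_i(t_0)+O(1).
\]
More directly, the work remaining for \GREEDY on router $i$ at $t_0$ equals $g_i(t_0)=a_i(t_0)+\Delta_i(t_0)$. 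We bound $a_i(t_0)$ together with the packets released later in the busy period by $\OPT$, using the fact that all these packets have priority comparable to $p$, so \OPT must finish the last of them with flow time at least the elapsed length. Applying Lemma~\ref{lemma:DeltaUpper} to $\Delta_i(t_0)$ with $i\le k$ then completes the bound.

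\textbf{Main obstacle.} The hardest step is making the priority comparison rigorous. Length-2 packets arriving at router $i$ from router $i-1$ appear with delay already accumulated, and a length-1 packet released on router $i$ may be young. We must show that every packet of $Q$ that \OPT handles in $[t_0,C(p))$ forces \OPT's flow time to be at least the elapsed time since $t_0$ plus $p$'s priority at $t_0$, minus the lead $\Delta_i(t_0)$. To do this, we will use the priority invariant: a packet's priority increases by one per unit of waiting, and a packet's priority is a lower bound on its \OPT flow time, up to its own remaining length. If this direct accounting loses more than a constant, the fallback is to select $t_0$ as the first time in the busy period that maximizes $\Delta_i$, in the same spirit as the choice made in the proof of Lemma~\ref{lemma:DeltaUpper}.
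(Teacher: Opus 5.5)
Your plan has a genuine gap. The key relation $C(p)-r(p)\le \OPT+\Delta_i(t_0)+O(1)$ charges all of $p$'s excess delay to \OPT's lead on $p$'s \emph{last} router $i$. It is false when $p$ is held up on router $i-1$. You justify it by saying that delay on earlier routers ``is still visible in its priority,'' so \OPT's last packet of $Q$ must have flow time about $\pi(p,t_0)+|Q|$. That step treats \GREEDY's priority as a lower bound for \OPT. But $\pi(p,t_0)$ only records how long \GREEDY held $p$ back, and \OPT need not incur any of that delay.

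Here is a concrete instance with $k=3$:
\begin{itemize}
\item at time $0$, $h$ length-1 packets on router 1, and at time $2$, $h$ length-2 packets on routers 1,2 (the start of the $k=2$ lower bound);
\item at time $h+1$, $2h$ length-1 packets on router 2;
\item at time $h+3$, one length-2 packet $p$ on routers 2,3.
\end{itemize}
\GREEDY forwards the length-2 packets from router 1 on router 2 during $[h+1,2h]$, then the $2h$ short packets, then $p$ at time $4h+1$. Router 3 has been idle, so $p$ reaches it at $4h+2$ and has flow time $3h$. Hence $t_0=4h+2$ and $Q=\{p\}$. Also $\Delta_3(t)\le 1$ for all $t$, since router 3 only ever sees $p$. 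Meanwhile $\OPT\le 2h+3$: schedule router 1 as in the $k=2$ construction, and on router 2 forward $p$ at time $h+3$, before the short packets. Your inequality would give $3h\le 2h+O(1)$. Your fallback of choosing $t_0$ to maximize $\Delta_i$ cannot help, because the excess is not on router 3 at all.

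Your counting is sound when the busy period on router $i$ begins by time $r(p)+1$, in particular whenever $p$ is released on router $i$. In that case every packet \OPT must still forward on router $i$ after $t_0$ was released by $r(p)+1$, which gives $C(p)-r(p)\le\OPT+\Delta_i(t_0)+1$. The missing case is a length-2 packet waiting on router $i-1$, and it needs \OPT's lead $\Delta_{i-1}$ on the preceding router.

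The paper handles it as follows. It fixes $t'=r(p)+2$ and observes that no packet released at $t'$ or later can be forwarded ahead of $p$. It concludes that $p$ completes by $t'+\max\{g_{i-1}(t')+1,\,g_i(t')\}$. It then bounds $g_{i-1}(t')=a_{i-1}(t')+\Delta_{i-1}(t')$ and $g_i(t')=a_i(t')+\Delta_i(t')$ using $a_j\le\OPT$ and Lemma~\ref{lemma:DeltaUpper}. The two waits are combined by a maximum, not a sum; a sum would lose the factor $2-2^{1-k}$. In the example above, the excess is accounted for by \OPT's lead of roughly $h$ on router 2.
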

\begin{proof}
Fix any router $i\ge2$ and any packet $p$ that is released at time $r(p)$ on router $i-1$ or $i$ and has router $i$ as its last router.
The priority of every packet that remains alive in \GREEDY's schedule increases by 1 in each time step in which it is not processed. 
Therefore, no packet released at time $t'=r(p)+2$ or later, on any router, can delay $p$, because $p$ has higher priority. 

At time $t'$, there are $g_{i-1}(t')$
packets alive (on router $i-2$ or $i-1$) that will eventually need to be processed on router $i-1$, and
\begin{align*} 
g_{i-1}(t') 
&= a_{i-1}(t')+\Delta_{i-1}(t') \\ 
&\le \OPT +\left(1-\frac{1}{2^{i-2}}\right)\OPT, \text{ by Lemma~\ref{lemma:DeltaUpper}} \\
&=\left(2-\frac{1}{2^{i-2}}\right)\OPT.
\end{align*}

Likewise, at time $t'$, there are
\[ g_i(t') = a_i(t')+\Delta_i(t') \le \left(2-\frac{1}{2^{i-1}}\right)\OPT\]
packets alive that need to be processed on router $i$. 

Thus, if $p$ is released at router $i$, its completion time is no later than $t'+g_i(t')$.
If $p$ is released at router $i-1$, it reaches router $i$ no later than at time $t'+g_{i-1}(t')$.
Because no packet released at time $t'$ or later can delay $p$, it cannot be delayed on router $i$ beyond time $t'+g_i(t')$.
Therefore, the completion time of $p$ is at most $t'+\max\{g_{i-1}(t')+1,g_i(t')\}$.

Since $t'=r(p)+2$, we conclude that $p$ is completed no later than at time 
\begin{align*}
    r(p)+2+\max\{ g_{i-1}(t')+1, g_i(t') \} 
    &< r(p)+2 + \left(2-\frac{1}{2^{i-1}}\right)\OPT+1\\
    &\le r(p) + \left(2-\frac{1}{2^{k-1}}\right)\OPT+3, \text{ since } i \le k.
\end{align*}
 Hence, the flow time of $p$ is at most $\left(2-\frac{1}{2^{k-1}}\right)\OPT+3$.
\end{proof}

Since this matches the lower bound of Theorem~\ref{GreedyLower}, this proves that the competitive ratio of \GREEDY is exactly 
$\left(2-\frac{1}{2^{k-1}}\right)$ on inputs with only $k$ active routers. Since one is dealing with a restricted \OPT when using the online-bounded ratio, $\left(2-\frac{1}{2^{k-1}}\right)$ is also the online-bounded ratio.

\section{General lower bounds}\label{sec:generallower}

We start with a warm-up construction before proving the $4/3$ lower bound for randomized algorithms.

\begin{theorem}\label{thm:general-lower-rand}
	Any randomized algorithm has a competitive ratio of at least $6/5$ even with only $k=2$ routers.
\end{theorem}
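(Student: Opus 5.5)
The plan is to use Yao's principle. I will define a probability distribution over a small family of inputs on $k=2$ routers and show that every deterministic online algorithm has expected maximum flow time at least $(6/5-o(1))$ times \OPT on that distribution. The instances reuse the building blocks of the warm-up proposition for \GREEDY, with a large parameter $h$:
\begin{itemize}
\item a block $A$ of $h$ length-1 packets released on router~1 at time $0$;
\item a block $B$ of $h$ length-2 packets released on router~1 at time $2$.
\end{itemize}
During the first roughly $h$ steps the algorithm must decide how to interleave $A$ and $B$ on router~1. Let $y$ denote the number of $B$-packets a deterministic algorithm has forwarded on router~1 by time about $h+2$. Since the input is identical up to that point in every scenario, $y$ is the same in all scenarios, and this is exactly the quantity on which the adversary's randomization acts.

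Next I will fix two continuations, released at time about $h+3$.
\begin{itemize}
\item \emph{Scenario 1 punishes a small $y$.} As in the proof of the proposition for $k=2$, $2h$ length-1 packets are released on router~2. \OPT, which forwarded $B$ first, achieves about $2h$. The algorithm still has $h-y$ packets of $B$ to push through router~2, so it delays the burst and gets flow time about $2h+(h-y)$.
\item \emph{Scenario 2 punishes a large $y$.} A burst of length-1 packets, of a size to be tuned (say $\beta h$), is released on router~1. Here the $y$ $A$-packets the algorithm left behind compete with the burst. \OPT, having served $A$ first, stays at about $\max\{2h,\ \beta h+\text{const}\cdot h\}$, while the algorithm pays an extra of order $y$.
\end{itemize}
For each scenario I will compute \OPT's cost exactly, up to additive constants, by exhibiting a schedule and matching it with the counting lower bound ``$\OPT\ge$ work queued on a router since the release time of the oldest pending packet.'' I will also compute the online cost as a piecewise-linear function of $x=y/h\in[0,1]$.

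Finally, I will choose the burst size $\beta$ and a mixing probability $q$ for the two scenarios. These are chosen so that the expected normalized cost $q\,c_1(x)+(1-q)\,c_2(x)$, divided by the respective \OPT values, is at least $6/5$ for every $x\in[0,1]$. In practice this means equalizing the two ratios at the worst $x$ and checking the endpoints. Letting $h\to\infty$ absorbs the additive constants, so the asymptotic competitive ratio of any randomized algorithm is at least $6/5$. The same argument works against an oblivious adversary, because the scenarios differ only after time $h+2$, when $y$ has already been determined.

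The main obstacle is designing Scenario~2 so that it genuinely penalizes forwarding $B$ early while keeping \OPT near $2h$. If a router-1 burst does not give a clean penalty, I would try a second router-1 block of length-2 packets instead, or add a third scenario that continues with nothing. I would then solve the resulting small linear program in $(q,\beta)$ to see where $6/5$ comes out. Checking that the exhibited \OPT schedules are optimal, and that the online algorithm's best response really is captured by the single parameter $y$, are routine steps. Any reordering within router~1 that forwards the same $y$ $B$-packets can only help or hurt by additive constants.
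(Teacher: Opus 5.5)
There is a genuine gap, and it lies in your opening rather than in Scenario~2. With $A$ ($h$ length-1 packets at time $0$) and $B$ ($h$ length-2 packets at time $2$) there is no real trade-off in $y$, so no choice of $\beta$, $q$, or extra scenarios can push your bound above~$1$.

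At time $h+3$ every zealous algorithm has exactly $h-3$ packets left on router~1, whatever $y$ is. A larger $y$ only means the leftovers are $A$-packets (released at $0$, one hop) instead of $B$-packets (released at $2$, two hops). For any target flow time, these two kinds have the same router-1 deadline up to one step. So your claim that in Scenario~2 ``the algorithm pays an extra of order $y$'' fails. A router-1 burst, a router-1 block of length-2 packets, or no continuation at all costs the algorithm only $O(1)$ more than \OPT.

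Now take the algorithm that forwards $B$ immediately, i.e.\ $x=1$; this is exactly \OPT's schedule in the proposition. Its router~2 is empty at time $h+3$, so Scenario~1 costs it only $2h$, as your own formula $(3-x)/2$ gives at $x=1$. Your family branches only once, so after seeing the branch this algorithm can copy \OPT's remaining schedule. It serves its leftover $A$-packets in the router-1 slots where \OPT serves its leftover old packets, and loses at most an additive constant on every instance. The proposition's instance exploits only \GREEDY's priority rule, which prefers $A$ by one unit; it poses no dilemma for a clever online algorithm.

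What is missing is an opening in which forwarding long packets early is costly even if nothing else ever arrives. The paper releases $2h$ short packets at time $0$ and $h$ long packets at time $h$. Let $y$ be the expected number of long packets forwarded on router~1 before time $2h$.
\begin{itemize}
\item If nothing more arrives, \ALG's expected cost is at least $2h+y$, while \OPT (short packets first) achieves $2h+1$. So $6/5$-competitiveness forces $y\le 2h/5+O(1)$.
\item If $3h$ jam packets are then released on router~2 at time $2h+1$, \ALG's expected cost is at least $(h-y)+3h\ge 18h/5-O(1)$, while \OPT (long packets first) achieves $3h$.
\end{itemize}
No Yao argument is needed: linearity of expectation over the common prefix and an oblivious choice between the two instances suffice.

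Your mixing framework would also work on this opening. The ratios $1+x/2$ and $(4-x)/3$ are balanced by weight $2/5$ on the empty continuation, giving exactly $6/5$. Your Scenario~1 is the right punishment for small $y$; it is the base instance, not Scenario~2, that must change.
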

\begin{proof}
    We give a family of input sequences $\{I_h\}_{h\in\NAT}$.
Each sequence $I_h$ begins with the following packets.
	\begin{itemize}
		\item At time $0$, $2\nrpackets$ packets are released that need to be processed on router $1$. We call these \emph{short packets}.
		\item At time $\nrpackets$, $\nrpackets$ packets are released that need to be processed on routers $1$ and $2$. We call these \emph{long packets}.
	\end{itemize}
	Let $y\in[0,\nrpackets]$ be the expected number of long packets that \alg processes on router $1$ before time $2h$.
    The expected maximum flow time among the short packets must be at least 
$2\nrpackets+y$.
If no further packets are released, the optimal maximum flow time is $2\nrpackets+1$, achieved by giving priority to the short packets at router $1$. 
Thus, since \alg is $6/5$-competitive, there must be a constant $a$ such that
\[2\nrpackets+y \leq \frac65 \cdot (2\nrpackets+1)+a,\]
or equivalently, by solving for $y$, 
\[
y\leq \frac{2\nrpackets}{5}+b, \text{ where } b=a+\frac65.
\]

    Thus, the expected number of packets that remain for \alg to process on router $2$ at time $2\nrpackets+1$ is at least 
    \[
    \nrpackets-y\geq\frac{3\nrpackets}{5}-b.
    \]
    If now $3\nrpackets$ additional \emph{jam packets} are released at time $2\nrpackets+1$ that need to be processed by router $2$, 
    the expected number of packets that \ALG still needs to process on router $2$ is at least 
    \[
    \frac{3\nrpackets}{5}-b+3h = \frac{18h}{5}-b
    \]
    so the expected maximum flow time in \ALG's schedule is at least this number.
    
    The optimal maximum flow time for the entire instance is $3\nrpackets$, achieved by giving priority to the long packets at router $1$ and thereby avoiding any conflict between the long packets and the jam packets. 
    Hence, $\ALG(I_h) \geq \frac65 \OPT(I_h)-b$. For $h$ large enough, this contradicts that \ALG is $(6/5-\eps)$-competitive.
\end{proof}

Next, we prove a lower bound of $4/3$ for randomized algorithms, using the same idea as in the proof of Theorem~\ref{thm:general-lower-rand}, using ``short'' and ``long'' packets iteratively to build up how much behind \OPT the randomized algorithm, \ALG, is on a specific router. The proof also reuses the idea of restricting
\ALG to maintaining the goal competitive ratio during each iteration (since it is online), but allowing \OPT to perform worse than \ALG until the end where there are many ``jam'' packets. Unfortunately, this means that the lower bound we obtain for the competitive ratio does not immediately imply the same result for the online-bounded ratio, even though the lower bound of Theorem~\ref{GreedyLower} for \GREEDY immediately implies a lower bound for \GREEDY's online-bounded ratio.

We call an instance $I$ $(t,i,U,L)$-{\em critical} if
there exists an offline algorithm \OFF satisfying the following conditions:
\begin{enumerate}
        \item[(i)] The cost of \OFF on $I$ is at most $U$.

        \item[(ii)] At time $t$, \OFF has no packets left to process on routers $i,i+1,\dots, k$.

        \item[(iii)] At time $t$, for every $4/3$-competitive randomized algorithm $\alg$,
        the expected number of packets that $\alg$ has left to process on router $i$ is at least $L$.
\end{enumerate}

First, we show the following lemma.

\begin{lemma}\label{lem:general-lower}
For any $4/3$-competitive algorithm \ALG, there exists a constant $b$ such that, for any nonnegative $L$ and any positive even $U$,
the existence of a $(t,i,U,L)$-critical instance, $I$, implies the existence of a $(t',i+1,3U/2,L+U/6-b)$-critical instance, $I'$, for some $t'\in\NAT$.
\end{lemma}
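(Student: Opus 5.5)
The plan is to mimic one round of the proof of Theorem~\ref{thm:general-lower-rand}, shifted from router $1$ to router $i$ and scaled by $U$. Start from the $(t,i,U,L)$-critical instance $I$ and its witness \OFF. Obtain $I'$ by appending two groups of packets:
\begin{itemize}
\item $U$ \emph{short} packets of length $1$, released on router $i$ at time $t$;
\item $U/2$ \emph{long} packets of length $2$, needing routers $i$ and $i+1$, released on router $i$ at time $t+U/2$.
\end{itemize}
Here $U/2$ is an integer because $U$ is even. Set $t'=t+U+1$. Since \ALG is online, its behaviour on $I'$ up to time $t'$ does not depend on whether more packets follow.

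\textbf{Conditions (i) and (ii) for $I'$.} Define the witness $\OFF'$ to behave like \OFF on routers $1,\dots,i-1$. By (ii) for $I$, routers $i,\dots,k$ are empty for \OFF at time $t$. On router $i$, $\OFF'$ processes shorts during $[t,t+U/2)$, then all longs during $[t+U/2,t+U)$, then the remaining shorts during $[t+U,t+3U/2)$. Router $i+1$ forwards each long one step after it leaves router $i$.
\begin{itemize}
\item The last short has flow time $3U/2$.
\item Each long has flow time about $U/2+1$.
\item Packets of $I$ keep cost at most $U$.
\end{itemize}
Hence the cost of $\OFF'$ is at most $3U/2$. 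At time $t'$ all longs are finished on router $i+1$, and nothing else is ever released on routers $i+1,\dots,k$. This gives (ii) for index $i+1$.

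\textbf{Condition (iii): the truncated instance.} Fix a $4/3$-competitive \ALG with additive constant $a$. Let $y$ be the expected number of long packets that \ALG forwards on router $i$ before time $t'$. Consider the truncated instance in which nothing is released after the shorts and longs. An offline schedule for it does the following:
\begin{itemize}
\item it follows \OFF on $I$;
\item on router $i$, it runs all shorts during $[t,t+U)$ and the longs during $[t+U,t+3U/2)$.
\end{itemize}
Every flow time is then at most $U+1$, so its optimum is at most $U+1$. Competitiveness therefore forces \ALG's expected maximum flow time to be at most $\frac43(U+1)+a$.

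\textbf{Lower bound on \ALG's flow time.} By (iii) for $I$, at time $t$ \ALG has in expectation at least $L$ old packets left on router $i$; all of them were released by time $t$. Router $i$ must process these old packets, the $U$ shorts, and the $y$ longs it handles before $t'$. I will argue that the last packet among the old packets and the shorts finishes no earlier than about $t+L+U+y-O(1)$. The reason is that longs forwarded before $t'$ occupy slots during which shorts or old packets must still be pending. The exception is when those packets are already all done, which is impossible before $t+L+U$ up to the $O(1)$ slack.

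\textbf{Combining.} Taking expectations (the maximum is at least the value for a fixed packet, so linearity suffices) gives $L+U+y-O(1)\le \frac43 U+a'$. Hence $y\le U/3-L+b$ for a constant $b$ depending only on $a$. The expected number of longs still to be processed by \ALG on router $i+1$ at time $t'$ is then at least $U/2-y-1\ge L+U/6-b'$. This is (iii) for $(t',i+1,3U/2,L+U/6-b')$.

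The main obstacle is the counting step that turns ``$y$ longs forwarded before $t'$'' into a lower bound on the completion time of the last short or old packet. It must be done in expectation over \ALG's randomness and must absorb idle steps and the $L$ old packets cleanly. I would first prove it pathwise for a fixed deterministic run: in $[t,t')$ router $i$ has at most $U+1$ slots, so shorts plus old packets left at $t'$ number at least $L+U+y-(U+1)$, and these must then be processed after $t'$. I would then average over the randomness, using linearity.
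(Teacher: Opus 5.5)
Your proposal is correct and follows the paper's proof essentially step for step. It uses the same $U$ short and $U/2$ long packets, the same shorts-first offline schedule of cost at most $U+1$ (which forces $y\le U/3-L+b$), the same longs-interleaved witness of cost $3U/2$ that clears routers $i+1,\dots,k$ by $t'=t+U+1$, and the same count $U/2-y$ of longs left on router $i+1$. Your only deviations are measuring $y$ up to $t'$ instead of $t+U$ and doing the slot counting pathwise before averaging. These cost at most an additive constant absorbed into $b$, and they make rigorous the paper's informal step that the packets in $P$ finish in expectation no earlier than $t+L+U+y$.
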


\begin{proof}
    We extend $I$ to $I'$ by releasing the following additional packets:
    \begin{itemize}
        \item At time $t$, $U$ \emph{short packets} that need to be processed at router $i$.
        \item At time $t+U/2$, $U/2$ \emph{long packets} that need to be processed at routers $i$ and $i+1$.
    \end{itemize}

    Let $P$ consist of the $L$ packets waiting on router $i$ at time $t$ and the $U$ packets released there at time $t$.     Let \alg be a $4/3$-competitive algorithm.
    Let $y\in[0,U/2]$ be the expected number of long packets released at time $t+U/2$ that \ALG processes on router $i$ before time $t+U$. Then the expected time by which \ALG has processed all packets in $P$ is at least $t+L+U+y$.
    Hence, the expected maximum flow time among the $L+U$ packets in $P$ is at least $L+U+y$.

    Based on \OFF, we can define another offline algorithm $\OFF'$ with $\OFF'(I')\leq U{+1}$ by processing all short packets before the long packets. 
    Because \ALG is $4/3$-competitive and 
    \OPT performs at least as well as $\OFF'$, there must exist a constant $a$, independent of $U$ and $L$, such that
    \[
    L+U+y \le \frac43(U+1)+a.
    \]
    Equivalently, 
    \[
    y\leq \frac U3-L+b, \text{ where } b=a+\frac43.
    \]
    Thus, the expected number of packets that \alg still has to process on router $i'=i+1$ at time $t'=t+U{+1}$ is at least
    \[
    \frac U2-y\geq \frac U6+L-b.
    \]
    Finally, from \OFF we construct another offline algorithm $\OFF''$ by processing the long packets as soon as they become available and before the remaining short packets. The schedule of $\OFF''$ has cost at most $(3/2)\cdot U$ and has no packets left to process on routers $i',i'+1,\dots,k$ from time $t'$.
\end{proof}

We now prove the theorem by iteratively applying this lemma.

\begin{theorem}\label{thm:general-lower}
    Any randomized algorithm has a competitive ratio of at least $4/3$.
\end{theorem}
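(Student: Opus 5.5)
We argue by contradiction: suppose some randomized algorithm \ALG is $(4/3-\eps)$-competitive for some $\eps>0$. Then \ALG is in particular $4/3$-competitive, so Lemma~\ref{lem:general-lower} gives a constant $b$ for \ALG. The plan is to chain the lemma $m$ times across $m$ routers, accumulating \ALG's backlog, and then release a final block of jam packets.

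\textbf{Start.} The empty instance is $(0,1,U_0,0)$-critical for any even $U_0$, since \OFF can do nothing. Choose $U_0$ even and divisible by a large power of $2$, so that every $U_j=(3/2)^jU_0$ is an even integer for $j\le m$.

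\textbf{Iteration.} Applying the lemma $m$ times produces a $(t_m,m+1,U_m,L_m)$-critical instance with $U_j=(3/2)^jU_0$ and $L_{j+1}=L_j+U_j/6-b$. Solving the recurrence gives
\[
L_m=\frac{U_0}{6}\sum_{j<m}(3/2)^j-mb=\frac{U_0}{3}\bigl((3/2)^m-1\bigr)-mb=\frac{U_m-U_0}{3}-mb .
\]
This requires $k\ge m+1$ routers, so $m$ is fixed first and the number of routers is chosen accordingly; only the last router $m+1$ receives jam packets.

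\textbf{Jam.} At time $t_m$ we release $J$ jam packets on router $m+1$ that need only that router, with $J=U_m$. Since \OFF has nothing left on routers $\ge m+1$ at time $t_m$, it processes the jam packets with flow time at most $U_m$. Hence $\OPT\le U_m$ on the whole instance, and earlier packets are unaffected. \ALG, however, has an expected backlog of at least $L_m$ on router $m+1$, because every packet there still needs that router and router $m+1$ processes one packet per step. So the last jam packet has expected flow time at least $L_m+U_m$, and therefore
\[
\mathbb{E}[\ALG]\ge U_m+\frac{U_m-U_0}{3}-mb=\frac43U_m-\frac{U_0}{3}-mb .
\]

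\textbf{Contradiction.} With $\OPT\le U_m$ we compare
\[
\frac43U_m-\frac{U_0}{3}-mb \quad\text{against}\quad \left(\frac43-\eps\right)U_m+c .
\]
The difference is $\eps U_m-U_0/3-mb-c=U_0\bigl(\eps(3/2)^m-1/3\bigr)-mb-c$. Two choices make this positive:
\begin{itemize}
\item[1.] Pick $m$ large enough that $\eps(3/2)^m>1/3$; this fixes $m$ and hence $k$.
\item[2.] Then let $U_0\to\infty$, which makes the difference positive.
\end{itemize}
This contradicts $(4/3-\eps)$-competitiveness.

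\textbf{Main obstacle.} The delicate point is the order of quantifiers. The constant $b$ from Lemma~\ref{lem:general-lower} depends only on \ALG, not on $U$, $L$ or $m$, and $m$ must be fixed before $U_0$ grows. We also need the criticality condition (iii), which holds for every $4/3$-competitive algorithm, to apply to our particular \ALG. This is valid because a $(4/3-\eps)$-competitive algorithm is $4/3$-competitive. Finally, we must check that the jam packets do not increase \OFF's cost on earlier packets; this holds because they live only on router $m+1$, which \OFF has already cleared by time $t_m$.
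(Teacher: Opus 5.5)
Your proof is correct and follows the paper's argument. You chain Lemma~\ref{lem:general-lower} $m$ times so that \ALG's expected backlog on the final router reaches $(U_m-U_0)/3-mb$ while \OFF has cleared that router, release $U_m$ jam packets there, and fix $m$ before letting $U_0$ grow. You are a bit more explicit than the paper about router indexing, the need for $k\ge m+1$ routers, and the order of quantifiers. One wording slip: $L_m+U_m$ bounds \ALG's expected \emph{maximum} flow time, which need not be attained by the last jam packet.
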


\begin{proof}
    Let $\eps>0$. Suppose $\alg$ is 
    $(4/3-\eps)$-competitive.
    We start with the empty instance $I^{(0)}$, which is $(0,0,q,0)$-critical for any positive integer $q$. Let $i$ and $\ell$ be positive integers, where $\ell$ depends on $\varepsilon$.
    We set $q={2}^i\cdot \ell$, allowing us to apply Lemma~\ref{lem:general-lower} iteratively $i$ times to $I^{(0)}$ (the lemma requires the third entry of the $4$-tuple to be a multiple of {$2$}).

    This yields an instance $I^{(i)}$ that is $(t,i,U_i, L_i)$-critical for some $t\in\NAT$, where 
    \[
    U_i=\left(\frac32\right)^i \cdot q,
    \]
    and, for some constant $b$, 
    \begin{align*}
    L_i &= \sum_{j=0}^{i-1} \left( \frac16 \cdot U_j {-b} \right)
    {\geq} \left( \sum_{j=0}^{i-1} \left( \frac23 \right)^{i-j}  \cdot \frac16  \right)\cdot U_i {-bi}
    =\left(\sum_{j=1}^{i}\left(\frac23\right)^j\cdot\frac16\right)\cdot U_i{-bi}\\
    &= \left(\frac13-\frac{2^i}{3^{i+1}}\right)U_i-bi.
    \end{align*}
By choosing $i$ and $\ell$ sufficiently large, we obtain
\[
L_i > \left(\frac13-\frac{\eps}{4} \right) U_i - bi \geq \left( \frac13 - \frac{\eps}{2} \right) U_i.
\]
    Thus, by Lemma \ref{lem:general-lower}, \alg, being a $4/3$-competitive algorithm, has in expectation more than $(1/3 - \varepsilon/2)\cdot U_i$ packets left to process on router $i$ at time $t$.
    Releasing $U_i$ additional packets at time $t$ that need to be processed on router $i$ therefore gives \alg an expected cost greater than $(4/3 - \varepsilon/2)\cdot U_i$. In contrast,
    the definition of critical guarantees that \OFF, and hence \OPT, can schedule $I^{(i)}$ together with these additional packets with cost $U_i$, because no packet remains to be processed on router $i$ at time $t$. Since $\ell$, and hence $U_i$, can be chosen arbitrarily large, this contradicts the assumption that \alg is $(4/3-\eps)$-competitive.
\end{proof}

\section{Open Problem}

We conjecture that \GREEDY has a constant competitive ratio, possibly $2$, for online packet routing even when packet lengths are unbounded.
A difficulty in extending Lemma~\ref{lem:di+} to packets of length greater than two is that the packets delayed by \OPT, namely those in $G\setminus O$, need not all be located on one router. In the worst case, they could be distributed evenly across the preceding routers, for example. The maximum packet length may therefore enter the competitive analysis, perhaps only through an additive constant.

\bibliography{refs}

@article{DBLP:journals/algorithmica/AdamyAAE07,
  author       = {Udo Adamy and
                  Christoph Amb{\"{u}}hl and
                  R. Sai Anand and
                  Thomas Erlebach},
  title        = {Call Control in Rings},
  journal      = {Algorithmica},
  volume       = {47},
  number       = {3},
  pages        = {217--238},
  year         = {2007},
  doi          = {10.1007/s00453-006-0187-4}
}

@article{aiello2000adaptive,
  author    = {Aiello, W. and Kushilevitz, E. and Ostrovsky, R. and Ros{\'e}n, A.},
  title     = {Adaptive packet routing for bursty adversarial traffic},
  journal   = {Journal of Computer and System Sciences},
  volume    = {60},
  number    = {3},
  pages     = {482--509},
  year      = {2000},
  doi       = {10.1006/jcss.1999.1681}
}

@INPROCEEDINGS{ABCFMNP14,
   AUTHOR = "Antoniadis, Antonios and Barcelo, Neal and Cole, Daniel and Fox, Kyle and Moseley, Benjamin and Nugent, Michael and Pruhs, Kirk",
   TITLE = "Packet Forwarding Algorithms in a Line Network",
   BOOKTITLE = "11th Latin American Theoretical Informatics Symposium (LATIN)",
   SERIES = "Lecture Notes in Computer Science",
   VOLUME = 8392,
   PAGES = "610--621",
   PUBLISHER = "Springer",
   YEAR = 2014,
   doi = {10.1007/978-3-642-54423-1_53}
}

@inproceedings{awerbuch1993throughput,
  author    = {Awerbuch, Baruch and Azar, Yossi and Plotkin, Serge A.},
  title     = {Throughput-competitive on-line routing},
  booktitle = {34th Annual Symposium on Foundations of Computer Science (FOCS)},
  pages     = {32--40},
  year      = {1993},
  publisher = "IEEE",
  doi       = {10.1109/SFCS.1993.366884}
}

@article{DBLP:journals/mor/BansalKS06,
  author       = {Nikhil Bansal and
                  Tracy Kimbrel and
                  Maxim Sviridenko},
  title        = {Job Shop Scheduling with Unit Processing Times},
  journal      = {Mathematics of Operations Research},
  volume       = {31},
  number       = {2},
  pages        = {381--389},
  year         = {2006},
  doi          = {10.1287/moor.1060.0189}
}

@inproceedings{bohm2021throughput,
  title={Throughput scheduling with equal additive laxity},
  author={B{\"o}hm, Martin and Megow, Nicole and Schl{\"o}ter, Jens},
  booktitle={12th International Conference on Algorithms and Complexity (CIAC)},
  pages={130--143},
  year={2021},
  series       = {Lecture Notes in Computer Science},
  volume       = {12701},
  publisher    = {Springer},
  doi          = {10.1007/978-3-030-75242-2_9}
}

@article{borodin2001adversarial,
  author    = {Borodin, Allan and Kleinberg, Jon M. and Raghavan, Prabhakar and Sudan, Madhu and Williamson, David P.},
  title     = {Adversarial queuing theory},
  journal   = {Journal of the ACM},
  volume    = {48},
  number    = {1},
  pages     = {13--38},
  year      = {2001},
  doi       = {10.1145/363647.363659}
}

@article{BEFLL18,
  title={Online-bounded analysis},
  author={Boyar, Joan and Epstein, Leah and Favrholdt, Lene M. and Larsen, Kim S. and Levin, Asaf},
  journal={Journal of Scheduling},
  volume={21},
  number={4},
  pages={328--441},
  year={2018},
  doi={10.1007/s10951-017-0536-y}
}

@article{Chen01021995,
author = {Bo Chen},
title = {Analysis of Classes of Heuristics for Scheduling a Two-Stage Flow Shop with Parallel Machines at One Stage},
journal = {Journal of the Operational Research Society},
volume = {46},
number = {2},
pages = {234--244},
year = {1995},
doi = {10.1057/jors.1995.28}
}

@article{chlebus2012adversarial,
  author    = {Chlebus, Bogdan S. and Kowalski, Dariusz R. and Rokicki, Mariusz A.},
  title     = {Adversarial queuing on the multiple access channel},
  journal   = {ACM Transactions on Algorithms},
  volume    = {8},
  number    = {1},
  pages     = {5},
  year      = {2012},
  doi       = {10.1145/2071379.2071384}
}

@inproceedings{das2025approximation,
  title={Approximation Hardness of Resource Scheduling},
  author={Das, Rathish and Sun, Hao},
  booktitle={37th ACM Symposium on Parallelism in Algorithms and Architectures (SPAA)},
  pages={46--61},
  year={2025},
  publisher = "ACM",
  doi       = {10.1145/3694906.3743340}
}

@inproceedings{DBLP:conf/icalp/DisserKL15,
  author       = {Yann Disser and
                  Max Klimm and
                  Elisabeth L{\"{u}}bbecke},
  _editor       = {Magn{\'{u}}s M. Halld{\'{o}}rsson and Kazuo Iwama and Naoki Kobayashi and Bettina Speckmann},
  title        = {Scheduling Bidirectional Traffic on a Path},
  booktitle    = {42nd International Colloquium on Automata, Languages, and Programming (ICALP)},
  series       = {Lecture Notes in Computer Science},
  volume       = {9134},
  pages        = {406--418},
  publisher    = {Springer},
  year         = {2015},
  doi          = {10.1007/978-3-662-47672-7_33}
}

@INPROCEEDINGS{667220,
  author={Erlebach, Thomas and Jansen, Klaus},
  booktitle={30th Annual Hawaii International Conference on System Sciences (HICSS)}, 
  title={Call scheduling in trees, rings and meshes}, 
  year={1997},
  publisher = "IEEE",
  volume={1},
  pages={221},
  doi={10.1109/HICSS.1997.667220}
}

@inproceedings{DBLP:conf/icalp/Havill01,
  author       = {Jessen T. Havill},
  _editor       = {Fernando Orejas and
                  Paul G. Spirakis and
                  Jan van Leeuwen},
  title        = {Online Packet Routing on Linear Arrays and Rings},
  booktitle    = {28th International Colloquium on Automata, Languages and Programming (ICALP)},
  series       = {Lecture Notes in Computer Science},
  volume       = {2076},
  pages        = {773--784},
  publisher    = {Springer},
  year         = {2001},
  doi          = {10.1007/3-540-48224-5_63}
}

@article{hyatt2024approximations,
  title={Approximations for throughput maximization},
  author={Hyatt-Denesik, Dylan and Rahgoshay, Mirmahdi and Salavatipour, Mohammad R},
  journal={Algorithmica},
  volume={86},
  number={5},
  pages={1545--1577},
  year={2024},
  publisher={Springer},
  doi={10.1007/s00453-023-01201-4}
}

@inproceedings{DBLP:conf/spaa/ImM15a,
  author       = {Sungjin Im and
                  Benjamin Moseley},
  editor       = {Guy E. Blelloch and
                  Kunal Agrawal},
  title        = {Scheduling in Bandwidth Constrained Tree Networks},
  booktitle    = {27th {ACM} on Symposium on Parallelism in Algorithms
                  and Architectures (SPAA)},
  pages        = {171--180},
  publisher    = {{ACM}},
  year         = {2015},
  doi          = {10.1145/2755573.2755576}
}

@article{kesselman2004buffer,
  author    = {Kesselman, Alexander and Lotker, Zvi and Mansour, Yishay and Patt-Shamir, Boaz and Schieber, Baruch and Sviridenko, Maxim},
  title     = {Buffer overflow management in {QoS} switches},
  journal   = {SIAM Journal on Computing},
  volume    = {33},
  number    = {3},
  pages     = {563--583},
  year      = {2004},
  doi       = {10.1137/S0097539701399666}
}

@article{leighton1994packet,
  author    = {Leighton, F. T. and Maggs, Bruce M. and Rao, Satish B.},
  title     = {Packet routing and job-shop scheduling in \mbox{$O$}(Congestion + Dilation) steps},
  journal   = {Combinatorica},
  volume    = {14},
  number    = {2},
  pages     = {167--186},
  year      = {1994},
  doi       = {10.1007/BF01215349}
}

@article{liu2024online,
  title={Online task scheduling and termination with throughput constraint},
  author={Liu, Qingsong and Fang, Zhixuan},
  journal={IEEE/ACM Transactions on Networking},
  volume={32},
  number={6},
  pages={4629--4643},
  year={2024},
  publisher={IEEE},
  doi       = {10.1109/TNET.2024.3425617}
}

@inproceedings{ostrovsky1997universal,
  author    = {Ostrovsky, Rafail and Rabani, Yuval},
  title     = {Universal \mbox{$O$}(Congestion + Dilation + $\mbox{log}^{1+\epsilon} \mbox{$N$}$) local control packet switching algorithms},
  booktitle = {29th Annual ACM Symposium on Theory of Computing (STOC)},
  pages     = {644--653},
  year      = {1997},
  publisher = "ACM",
  doi       = {10.1145/258533.258659}
}

@inproceedings{DBLP:conf/infocom/Patt-ShamirR19,
  author       = {Boaz Patt{-}Shamir and
                  Will Rosenbaum},
  title        = {Space-Optimal Packet Routing on Trees},
  booktitle    = {{IEEE} Conference on Computer Communications (INFOCOM)},
  pages        = {1036--1044},
  publisher    = {{IEEE}},
  year         = {2019},
  doi          = {10.1109/INFOCOM.2019.8737596}
}

@article{SOUKHAL200532,
author = {Ameur Soukhal and Ammar Oulamara and Patrick Martineau},
title = {Complexity of flow shop scheduling problems with transportation constraints},
journal = {European Journal of Operational Research},
volume = {161},
number = {1},
pages = {32--41},
year = {2005},
doi = {10.1016/j.ejor.2003.03.002}
}

@article{doi:10.1137/1.9781611975482.9,
author = {Pavel Veselý and Marek Chrobak and Łukasz Jeż and Jiří Sgall},
title = {A $\phi$-Competitive Algorithm for Scheduling Packets with Deadlines},
journal = "SIAM Journal on Computing",
volume = 51,
number = 5,
year = 2022,
pages = {1626-1691},
doi = {10.1137/21M1469753}
}

@article{WEI201941,
author = {Hongjun Wei and Jinjiang Yuan},
title = {Two-machine flow-shop scheduling with equal processing time on the second machine for minimizing total weighted completion time},
journal = {Operations Research Letters},
volume = {47},
number = {1},
pages = {41--46},
year = {2019},
doi = {10.1016/j.orl.2018.12.002}
}
\bibliographystyle{plain}

\end{document}